\DeclareMathAlphabet{\pazocal}{OMS}{zplm}{m}{n}
\DeclareMathAlphabet{\xitscal}{OMS}{xits}{m}{n}
\newcommand{\seq}[2]{\textsf{seq}_{#1}(#2)}
\newcommand{\infset}[2]{\textsf{inf}_{#1}(#2)}
\newcommand{\calI}[0]{{\pazocal{I}}}
\newcommand{\calB}[0]{{\xitscal{B}}}
\newcommand{\calC}[0]{{\xitscal{C}}}
\newcommand{\calZ}[0]{{\xitscal{Z}}}
\newcommand{\calA}[0]{{\xitscal{A}}}
\newcommand{\calH}[0]{{\xitscal{H}}}
\newcommand{\calN}[0]{{\xitscal{N}}}
\newtheorem{lemma}{Lemma}
\newtheorem{definition}{Definition}
\newtheorem{proposition}{Proposition}
\newtheorem{theorem}{Theorem}
\renewcommand{\epsilon}{\varepsilon}
\DeclarePairedDelimiter\ceil{\lceil}{\rceil}
\DeclarePairedDelimiter\floor{\lfloor}{\rfloor}
\DeclareMathOperator*{\argmax}{arg\,max}
\begin{document}

\title{Computing Maxmin Strategies in Extensive-Form Zero-Sum Games with Imperfect Recall}

\author{\authorname{Branislav Bošanský\sup{1}, Jiří Čermák\sup{1}, Karel Horák\sup{1} and Michal Pěchouček\sup{1}}
\affiliation{\sup{1}Department of Computer Science, Czech Technical University in Prague, Czech Republic}
\email{\{bosansky, cermak, horak, pechoucek\}@agents.fel.cvut.cz}
}

\keywords{Game Theory, Imperfect Recall, Maxmin Strategies.}

\abstract{%\bbtodo{Link to complexity results on arxive}
Extensive-form games with imperfect recall are an important game-theoretic model that allows a compact representation of strategies in dynamic strategic interactions.
Practical use of imperfect recall games is limited due to negative theoretical results: a Nash equilibrium does not have to exist, computing maxmin strategies is NP-hard, and they may require irrational numbers.
We present the first algorithm for approximating maxmin strategies in two-player zero-sum imperfect recall games without absentmindedness.
We modify the well-known sequence-form linear program to model strategies in imperfect recall games resulting in a bilinear program and use a recent technique to approximate the bilinear terms.
Our main algorithm is a branch-and-bound search that provably reaches the desired approximation after an exponential number of steps in the size of the game.
Experimental evaluation shows that the proposed algorithm can approximate maxmin strategies of randomly generated imperfect recall games of sizes beyond toy-problems within few minutes.

}
\onecolumn \maketitle \normalsize \vfill
\section{\uppercase{Introduction}}
\noindent The extensive form is a well-known representation of dynamic strategic interactions that evolve in time.
Games in the extensive form (extensive-form games; EFGs) are visualized as game trees, where nodes correspond to states of the game and edges to actions executed by players.
This representation is general enough to model stochastic events and imperfect information when players are unable to distinguish among several states.
Recent years have seen advancements in algorithms for computing solution concepts in large zero-sum extensive-form games (e.g., solving heads-up limit texas hold'em poker~\cite{bowling2015heads}).

Most of the algorithms for finding optimal strategies in EFGs assume that players remember all information gained during the course of the game~\cite{zinkevich2008regret,Hoda2010,bosansky2014-jair}.
This assumption is known as \emph{perfect recall} and has a significant impact on theoretical properties of finding optimal strategies in EFGs.
Namely, there is an equivalence between two types of strategies in perfect recall games -- \emph{mixed strategies} (probability distributions over pure strategies\footnote{A pure strategy in an EFG is an assignment of an action to play in each decision point.}) and \emph{behavioral strategies} (probability distributions over actions in each decision point)~\cite{kuhn1953}.
This equivalence guarantees that a Nash equilibrium (NE) exists in behavioral strategies in perfect recall games (the proof of the existence of NE deals with mixed strategies only \cite{Nash1950}) and it is exploited by algorithms for computing a NE in zero-sum EFGs with perfect recall -- the well-known sequence-form linear program~\cite{koller1996,vonStengel96}.

The caveat of perfect recall is that remembering all information increases the number of decision points (and consequently the size of a behavioral strategy) exponentially with the number of moves in the game.
One possibility for tackling the size of perfect recall EFGs is to create an abstracted game where certain decision points are merged together, solve this abstracted game, and then translate the strategy from the abstracted game into the original game~(e.g., see \cite{gilpin2007,kroer2014extensive,kroer2014extensiveIR}).
However, devising abstracted games that have \emph{imperfect recall} is desirable due to the reduced size. One then must compute behavioral strategies in order to exploit the reduced size (mixed strategies already operate over an exponentially large set of pure strategies).

Solving imperfect recall games has several fundamental problems.
The best known game-theoretic solution concept, a Nash equilibrium (NE), does not have to exist even in zero-sum games (see~\cite{wichardt2008} for a simple example) and standard algorithms (e.g., a Counterfactual Regret Minimization (CFR)~\cite{zinkevich2008regret}) can converge to incorrect strategies (see Example~1).
Therefore, we focus on finding a strategy that guarantees the best possible expected outcome for a player -- \emph{a maxmin strategy}.
However, computing a maxmin strategy is NP-hard and such strategies may require irrational numbers even when the input uses only rational numbers~\cite{koller1992}.

Existing works avoid these negative results by creating very specific abstracted games so that perfect recall algorithms are still applicable.
One example is a subset of imperfect recall games called \emph{(skewed) well-formed games}, motivated by the poker domain, in which the standard perfect-recall algorithms (e.g., CFR) are still guaranteed to find an approximate Nash behavioral strategy~\cite{lanctot2012,kroer2014extensiveIR}.
The restrictions on games to form (skewed) well-formed games are, however, rather strict and can prevent us from creating sufficiently small abstracted  games.
To fully explore the possibilities of exploiting the concept of abstractions and/or other compactly represented dynamic games (e.g., Multi-Agent Influence Diagrams~\cite{koller2003multi}), a new algorithm for solving imperfect recall games is required.

\subsection{Our Contribution}

\noindent We advance the state of the art and provide the first approximate algorithm for computing maxmin strategies in imperfect recall games (since maxmin strategies might require irrational numbers~\cite{koller1992}, finding exact maxmin has fundamental difficulties).
We assume imperfect recall games with no \emph{absentmindedness}, which means that each decision point in the game can be visited at most once during the course of the game and it is arguably a natural assumption in finite games (see, e.g., \cite{piccione1997} for a detailed discussion).
The main goal of our approach is to find behavioral strategies that maximize the expected outcome of player 1 against an opponent that minimizes the outcome.
We base our formulation on the sequence-form linear program for perfect recall games~\cite{koller1996,vonStengel96} and we extend it with bilinear constraints necessary for the correct representation of strategies of player 1 in imperfect recall games.
We approximate the bilinear terms using recent Multiparametric Disaggregation Technique (MDT)~\cite{Kolodziej2013} and provide a mixed-integer linear program (MILP) for approximating maxmin strategies.
Finally, we consider a linear relaxation of the MILP and propose a branch-and-bound algorithm that (1) repeatedly solves this linear relaxation and (2) tightens the constraints that approximate bilinear terms as well as relaxed binary variables from the MILP.
We show that the branch-and-bound algorithm ends after exponentially many steps while guaranteeing the desired precision.

Our algorithm approximates maxmin strategies for player 1 having generic imperfect recall without absentmindedness and we give two variants of the algorithm depending on the type of imperfect recall of the opponent.
If the opponent, player 2, has either a perfect recall or so-called \emph{A-loss recall}~\cite{kaneko1995,kline2002}, the linear program solved by the branch-and-bound algorithm has a polynomial size in the size of the game.
If player 2 has a generic imperfect recall without absentmindedness, the linear program solved by the branch-and-bound algorithm can be exponentially large. % and our algorithm can be further enhanced with constraint-generation techniques that iteratively adds best responses of player 2 to reduce the size of the linear program in practice.

We provide a short experimental evaluation to demonstrate that our algorithm can solve games far beyond the size of toy problems.
Randomly generated imperfect recall games with up to $5 \cdot 10^3$ states can be typically solved within few minutes.

All the technical proofs can be found in the appendix or in the full version of this paper.

\section{\uppercase{Technical Preliminaries}}
\noindent Before describing our algorithm we define extensive-form games, different types of recall, and describe the approximation technique for the bilinear terms.

A two-player extensive-form game (EFG) is a tuple $G=(\calN,\calH,\calZ,\calA,u,\calC,\calI)$.
$\calN = \{1, 2\}$ is a set of players, by $i$ we refer to one of the players, and by $-i$ to his opponent. 
$\calH$ denotes a finite set of \emph{histories} of actions taken by all players and chance from the root of the game. 
Each history corresponds to a \emph{node} in the game tree; hence, we use terms history and node interchangeably.
We say that $h$ is a \emph{prefix} of~$h'$ ($h \sqsubseteq h'$) if $h$ lies on a path from the root of the game tree to $h'$.
$\calZ \subseteq \calH$  is the set of  \emph{terminal states} of the game. 
$\calA$ denotes the set of all actions.
An ordered list of all actions of player~$i$ from root to $h$ is referred to as a \emph{sequence}, $\sigma_i = \seq{i}{h}$, $\Sigma_i$ is a set of all sequences of~$i$. 
%Finally, playing action $a$ in node $h$ leads to node (history) $ha$.
For each $z \in \calZ$ we define a \emph{utility function} $u_i : \calZ \rightarrow \mathbb{R}$ for each player~$i$ ($u_i(z) = -u_{-i}(z)$ in zero-sum games). 
The chance player selects actions based on a fixed probability distribution known to all players. 
Function $\calC : \calH \rightarrow [0,1]$ is the probability of reaching $h$ due to chance.%; ${\cal C}(h)$ is the product of chance probabilities of all actions in history~$h$.

Imperfect observation of player $i$ is modeled via \emph{information sets} $\calI_i$ that form a partition over $h \in \calH$ where $i$ takes action.
Player $i$ cannot distinguish between nodes in any $I_i \in \calI_i$.
$\calA(I_i)$ denotes actions available in each $h \in I_i$.
The action $a$ uniquely identifies the information set where it is available.
We use $\seq{i}{I_i}$ as a set of all sequences of player~$i$ leading to $I_i$.
Finally, we use $\infset{i}{\sigma_i}$ to be a set of all information sets to which sequence $\sigma_i$ leads.

%There are number of different types of strategies in EFGs.
%We are interested in computing behavioral maxmin strategies, however, we define all standard representations.
%\emph{Pure strategies} $\Pi_i$ assign one action for each $I \in \calI_i$. 
%A~\emph{mixed strategy} $\delta_i \in \Delta_i$ is a probability distribution over $\Pi_i$.
A \emph{behavioral strategy} $\beta_i \in \calB_i$ is a probability distribution over actions in each information set $I \in \calI_i$.
We use $u_i(\beta) = u_i(\beta_i, \beta_{-i})$ for the expected outcome of the game for~$i$ when players follow $\beta$.
A \emph{best response} of player~$i$ against $\beta_{-i}$ is a strategy $\beta_{i}^{BR} \in BR_i(\beta_{-i})$, where  
$u_i(\beta_{i}^{BR}, \beta_{-i}) \geq u_i(\beta_i', \beta_{-i})$ for all $\beta_i' \in \calB_i$.
$\beta_i(I,a)$ is the probability of playing $a$ in $I$, $\beta(h)$ denotes the probability that $h$ is reached when both players play according to $\beta$ and due to chance.

We say that $\beta_i$ and $\beta_i'$ are \emph{realization equivalent} if for any $\beta_{-i}$, $\forall z \in \calZ\ \beta(z) = \beta'(z)$, where $\beta = (\beta_i, \beta_{-i})$ and $\beta' = (\beta'_i, \beta_{-i})$.

A \emph{maxmin strategy} $\beta^*_i$ is defined as $\beta^*_i = \argmax_{\beta_i \in \calB_i}\min_{\beta_{-i} \in \calB_{-i}}u_i(\beta_i, \beta_{-i})$. Note that when a Nash equilibrium in behavioral strategies exists in a two-player zero-sum imperfect recall game then $\beta^*_i $ is a Nash equilibrium strategy for $i$.

\subsection{Types of Recall}
\noindent We now briefly define types of recall in EFGs and state several lemmas and observations about characteristics of strategies in imperfect recall EFGs that are later exploited by our algorithm.

In \emph{perfect recall}, all players remember the history of their own actions and all information gained during the course of the game. 
As a consequence, all nodes in any information set~$I_i$ have the same sequence for player~$i$.
If the assumption of perfect recall does not hold, we talk about games with \emph{imperfect recall}.
In imperfect recall games, mixed and behavioral strategies are not comparable~\cite{kuhn1953}.
However, in games without \emph{absentmindedness} (AM) where each information set is encountered at most once during the course of the game, the following observation allow us to consider only pure best responses of the opponent when computing maxmin strategies:

\begin{lemma}
Let $G$ be an imperfect recall game without AM and $\beta_1$ strategy of player~1. 
There exists an \emph{ex ante} (i.e., when evaluating only the expected value of the strategy) pure behavioral best response of player~2.
\end{lemma}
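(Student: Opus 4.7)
The plan is to use a coordinate-wise purification argument that leverages the no-absentmindedness assumption to establish multilinearity of player~2's expected utility in the behavioral action probabilities.

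First, with $\beta_1$ fixed, I would decompose the expected utility as
$$u_2(\beta_1,\beta_2) \;=\; \sum_{z \in \calZ} \beta(z)\, u_2(z),$$
and factor each reach probability $\beta(z)$ into the product of the chance contribution $\calC(z)$, player~1's contribution (which depends only on $\beta_1$), and player~2's contribution along the path to $z$. Because the game has no absentmindedness, for every $I \in \calI_2$ at most one action of $\calA(I)$ appears on the path from the root to $z$; hence player~2's contribution to $\beta(z)$ is a product of at most one factor $\beta_2(I,a)$ per information set, and the coefficient of this factor does not itself depend on $\beta_2(I,\cdot)$. Summing over $z$, the map $\beta_2 \mapsto u_2(\beta_1,\beta_2)$ is therefore multilinear on the product of simplices $\prod_{I \in \calI_2} \Delta(\calA(I))$; in particular, with the distributions at all other information sets held fixed, it is affine in each single $\beta_2(I,\cdot)$.

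Second, starting from any best response $\beta_2^{BR} \in BR_2(\beta_1)$ (which exists by continuity of $u_2$ and compactness of $\calB_2$), I iterate through $\calI_2$ in an arbitrary order: at information set $I$, I replace $\beta_2^{BR}(I,\cdot)$ by a Dirac mass on an action in $\argmax_{a \in \calA(I)} u_2(\beta_1,\beta_2^{BR})$, evaluated with the current choices at the other information sets. By the affine dependence established above, any extreme point of $\Delta(\calA(I))$ attains at least the value given by $\beta_2^{BR}(I,\cdot)$, so the substitution never decreases $u_2(\beta_1,\beta_2^{BR})$ and the updated strategy is still a best response. After finitely many substitutions (one per information set), the resulting $\beta_2^\ast \in \calB_2$ is a pure behavioral best response.

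The principal obstacle is the multilinearity step, which rests squarely on the absence of absentmindedness: if some $I \in \calI_2$ could be visited twice on a single play, then $\beta_2(I,\cdot)$ would appear with multiplicity in certain reach probabilities, producing higher-degree terms in $\beta_2(I,\cdot)$ and breaking the extreme-point argument on $\Delta(\calA(I))$. Everything else is either a standard fact about linear optimization over a simplex or routine bookkeeping of path probabilities.
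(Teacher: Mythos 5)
Your proposal is correct and follows essentially the same route the authors rely on (the proof is only deferred to the full version): absence of absentmindedness makes $u_2(\beta_1,\cdot)$ affine in each block $\beta_2(I,\cdot)$, $I \in \calI_2$, with the other blocks fixed, so starting from any best response and replacing the distribution at each information set by a maximizing vertex of $\Delta(\calA(I))$ purifies it without decreasing the ex ante value. The only nitpick is the phrase ``any extreme point of $\Delta(\calA(I))$ attains at least the value''---it should be the \emph{best} extreme point, which is what your argmax substitution actually uses, so the substance is fine.
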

The proof is in the full version of the paper.

This lemma is applied when a mathematical program for computing maxmin strategies is formulated -- strategies of player 2 can be considered as constraints using pure best responses.
Note that this is not true in general imperfect recall games -- in games with AM, an ex ante best response may need to be randomized (e.g., in the game with absentminded driver~\cite{piccione1997}).

A disadvantage of using pure best responses as constraints for the minimizing player is that there are exponentially many pure best responses in the size of the game.
In perfect recall games, this can be avoided by formulating best-response constraints such that the opponent is playing the best action in each information set.
However, this type of response, termed \emph{time consistent strategy}~\cite{kline2002}, does not have to be an ex ante best response in general imperfect recall games (see~\cite{kline2002} for an example).
A class of imperfect recall games where it is sufficient to consider only time consistent strategies when computing best responses was termed as \emph{A-loss} recall games~\cite{kaneko1995,kline2002}.
\begin{definition}
Player $i$ has \emph{A-loss recall} if and only if for every $I \in \calI_i$ and nodes $h,h' \in I$ it holds either (1) $\seq{i}{h} = \seq{i}{h'}$, or (2) $\exists I' \in \calI_i$ and two distinct actions $a,a' \in \calA_i(I'), a\neq a'$ such that $a \in \seq{i}{h} \wedge a' \in \seq{i}{h'}$.
\end{definition}
Condition (1) in the definition says that if player~$i$ has perfect recall then she also has A-loss recall.
Condition (2)  requires that each loss of memory of A-loss recall player can be traced back to some loss of memory of the player’s own previous actions.

The equivalence between time consistent strategies and ex ante best responses allows us to simplify the best responses of player 2 in case she has A-loss recall. %Not only does it guarantee that the games model interactions with reasonable properties, it is also interesting from the computational perspective, as it also allows us to ensure that player 2 plays the best response in a concise way in the mathematical programs.
Formally, it is sufficient to consider best responses that correspond to the best response in a coarsest perfect-recall refinement of the imperfect recall game when computing best response for a player with A-loss recall.
By a \emph{coarsest perfect recall refinement} of an imperfect recall game $G$ we define a perfect recall game $G'$ where we split the imperfect recall information sets to biggest subsets still fulfilling the perfect recall.

\begin{definition}
\emph{The coarsest perfect recall refinement} $G'$ of the imperfect recall game $G = \{\calN, \calH, \calZ, \calA, u, \calC, \calI\}$ is a tuple $\{\calN, \calH, \calZ, \calA', u, \calC, \calI'\}$, where $\forall i \in \calN\ \forall I_i \in \calI_i$ $H(I_i)$ partitions information set $I_i$ such that $H(I_i) = \{H_1, ..., H_n\}$ is a disjoint partition of all $h \in I_i$, where $\bigcup_{j = 1}^n H_j = I_i$ and $\forall H_j \in H(I_i)\ \forall h_k, h_l \in H_j: \seq{i}{h_k} = \seq{i}{h_l}$ and $\forall h_k \in H_k, h_l \in H_l: H_k \cap H_l = \emptyset \Rightarrow \seq{i}{h_k} \neq \seq{i}{h_l}$. Each set from $H(I_i)$ corresponds to an information set $I_i' \in \calI_i'$.
Moreover, $\calA'$ is a modification of $\calA$ guaranteeing $\forall I \in \calI'\ \forall h_k, h_l \in I\ \calA'(h_k) = \calA'(h_l)$, while for all distinct $I^k, I^l \in \calI'\ \calA(I^k) \neq \calA(I^l)$.
\end{definition}

Note that we can restrict the coarsest perfect recall refinement only for $i$ by splitting only information sets of $i$ (information sets of $-i$ remain unchanged).
Finally, we assume that there is a mapping between actions from the coarsest perfect recall refinement $\calA'$ and actions in the original game $\calA$ so that we can identify to which actions from $\calA'$ an original action $a \in \calA$ maps. 
We assume this mapping to be implicit since it is clear from the context.

\begin{lemma}\label{lem:pureBR}
Let $G$ be an imperfect recall game where player 2 has A-loss recall and $\beta_1$ is a strategy of player 1, and let $G'$ be the coarsest perfect recall refinement of $G$ for player $2$.
Let $\beta'_2$ be a pure best response in $G'$ and let $\beta_2$ be a realization equivalent behavioral strategy in $G$, then $\beta_2$ is a pure best response to  $\beta_1$ in $G$.
\end{lemma}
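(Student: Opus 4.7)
The plan has three stages: first establish that, under the pure strategy $\beta'_2$, each original information set $I\in\calI_2$ has at most one refinement reachable (independent of the opponent's play); second use this to exhibit a pure $\beta_2$ in $G$ realization equivalent to $\beta'_2$; and third argue that $\beta_2$ attains the best-response value in $G$, since $G'$ only gives player~2 strictly more information. These together yield a pure best response in $G$.

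The heart of the argument is the first step. Suppose for contradiction that two distinct refinements $I^j,I^k$ of the same $I\in\calI_2$ are both reachable under $\beta'_2$, witnessed by histories $h\in I^j$ and $h'\in I^k$ (possibly under different strategies of player~1). The definition of the coarsest refinement gives $\seq{2}{h}\neq\seq{2}{h'}$, so A-loss recall yields an earlier $I'\in\calI_2$ and distinct actions $a,a'\in\calA(I')$ with $a\in\seq{2}{h}$ and $a'\in\seq{2}{h'}$. Let $h_0\sqsubset h$ and $h'_0\sqsubset h'$ be the corresponding predecessors in $I'$, and in $G'$ denote by $J$ and $J'$ the refined information sets of $\calI'_2$ containing $h_0$ and $h'_0$ respectively. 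If $J=J'$ then the pure $\beta'_2$ prescribes a single action at $J$, contradicting that both the $a$-branch (toward $h$) and the $a'$-branch (toward $h'$) are traversed under $\beta'_2$. Otherwise $\seq{2}{h_0}\neq\seq{2}{h'_0}$, and the same argument applies to the strictly shorter histories $h_0,h'_0$. Since the tree is finite, the descent must terminate in the contradictory case. I expect the well-foundedness of this A-loss descent to be the main technical point.

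Given this, define $\beta_2(I,\cdot):=\beta'_2(I^{j^*},\cdot)$, where $I^{j^*}$ is the unique refinement of $I$ reachable under $\beta'_2$ (and assign an arbitrary pure action at $I$ if no refinement of $I$ is reachable). This $\beta_2$ is pure. A straightforward induction on history depth then shows that $(\beta_1,\beta_2)$ and $(\beta_1,\beta'_2)$ induce identical reach probabilities at every node: at a player-2 node $h_1\in I$ reached under $(\beta_1,\beta_2)$ with positive probability, the inductive hypothesis also places it under $(\beta_1,\beta'_2)$, so its $G'$-refinement must be the unique reachable $I^{j^*}$, giving $\beta'_2(I^{j^*},\cdot)=\beta_2(I,\cdot)$. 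Hence $\beta_2$ is realization equivalent to $\beta'_2$.

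Finally, every behavioral strategy $\tilde\beta_2$ of player~2 in $G$ lifts to a strategy $\tilde\beta'_2$ of $G'$ that plays the same action in every refinement of each $I\in\calI_2$, and this lift preserves $u_2(\beta_1,\cdot)$. Therefore $u_2(\beta_1,\tilde\beta_2)=u_2(\beta_1,\tilde\beta'_2)\le u_2(\beta_1,\beta'_2)=u_2(\beta_1,\beta_2)$, where the inequality uses the $G'$-best-response property of $\beta'_2$ and the final equality is realization equivalence. Since $\beta_2$ is pure, it is a pure best response to $\beta_1$ in $G$.
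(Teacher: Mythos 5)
Your proof is correct: the crucial step -- showing that player 2's A-loss recall forces at most one refinement of each of her information sets to be consistent with the pure strategy's own earlier choices (via the well-founded descent to a common preceding information set where two distinct actions would have to be prescribed simultaneously), followed by the construction of the pure realization-equivalent strategy and the lifting argument for the best-response property -- is exactly what makes the coarsest perfect-recall refinement sufficient here, and I see no gaps beyond the lemma's own loose wording about an arbitrary realization-equivalent $\beta_2$. The paper defers this proof to its full version, so there is no in-paper argument to compare against, but your route is the natural one (essentially the time-consistency property of A-loss recall due to Kaneko--Kline).
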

The proof is in the full version of the paper.

Note that the $NP$-hardness proof of computing maxmin strategies due to Koller~\cite{koller1992} still applies, since we assume the maximizing player to have generic imperfect recall and the reduction provided by Koller results in a game where the maximizing player has generic imperfect recall while the minimizing player has perfect recall, which is a special case of both settings assumed in our paper.

Finally, let us show that CFR cannot be applied in these settings.
This is caused by the fact that CFR iteratively minimizes per information set regret terms (counterfactual regrets). Since in perfect recall games the sum of counterfactual regrets provides an upper bound on the external regret, this minimization is guaranteed to converge to a strategy profile with 0 external regret. In imperfect recall games, however, the sum of counterfactual regrets no longer forms an upper bound on the external regret \cite{lanctot2012}, and the minimization of these regret terms can, therefore, lead to a strategy profile with a non-zero external regret.

\noindent\emph{Example 1:} Consider the A-loss recall game in Figure \ref{fig:cfr_counter}. 
When setting the $x > 2$, one of the strategy profiles with zero counterfactual regret (and therefore a profile to which CFR can converge) is mixing uniformly between both $a$, $b$ and $g$, $h$, while player 2 plays $d$, $e$ deterministically.
By setting the utility $x$ as some large number, this strategy profile can have expected utility arbitrarily worse than the maxmin value $-1$. The reason is the presence of the conflicting outcomes for some action in an imperfect recall information set that cannot be generally avoided, or easily detected in imperfect recall games.
\begin{figure}
\centering
\includegraphics[width=3.5cm]{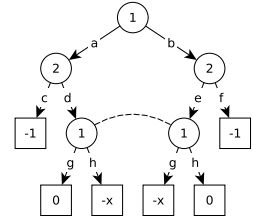}
\caption{An A-loss recall game where CFR finds a strategy with the expected utility arbitrarily distant from the maxmin value.}
\label{fig:cfr_counter}
\end{figure}
%Due to the space constraints, the proof is in the appendix.

\subsection{Approximating Bilinear Terms}\label{sec:MDT}
\noindent The final technical tool that we use in our algorithm is the approximation of bilinear terms by Multiparametric Disaggregation Technique (MDT)~\cite{Kolodziej2013} for approximating bilinear constraints. 
The main idea of the approximation is to use a digit-wise discretization of one of the variables from a bilinear term. 
The main advantage of this approximation is a low number of newly introduced integer variables and an experimentally confirmed speed-up over the standard technique of piecewise McCormick envelopes~\cite{Kolodziej2013}. 

{\small
\begin{subequations}\label{eq:mdt}
\begin{align}
\smash[t]{\sum_{k=0}^9} w_{k,\ell} & = 1&&\ell \in \mathbb{Z}\label{eq:mdt:1}\\
w_{k,\ell}  \in &\{0,1\}&& \\
\smash[t]{\sum_{\ell \in \mathbb{Z}}\sum_{k=0}^9} 10^\ell \cdot k \cdot w_{k,\ell} & = b && \label{eq:mdt:2}\\
c^L \cdot w_{k,\ell} \leq \hat{c}_{k,\ell} \leq c^U & \cdot w_{k,\ell} &&\forall \ell \in \mathbb{Z}, \forall k \in 0..9 \label{eq:mdt:3}\\
\smash[b]{\sum_{k=0}^9} \hat{c}_{k,\ell} & = c &&\forall  \ell \in \mathbb{Z} \label{eq:mdt:4}\\
\sum_{\ell \in \mathbb{Z}}\sum_{k=0}^9 10^\ell\cdot k \cdot \hat{c}_{k,\ell} & = a &&\label{eq:mdt:5}
\end{align}
\end{subequations}}

Let $a = bc$ be a bilinear term. 
MDT discretizes variable $b$ and introduces new binary variables $w_{k,l}$ that indicate whether the digit on $\ell$-th position is $k$.
Constraint~(\ref{eq:mdt:1}) ensures that for each position $\ell$ there is exactly one digit chosen.
All digits must sum to $b$ (Constraint~(\ref{eq:mdt:2})).
Next, we introduce variables $\hat{c}_{k,\ell}$ that are equal to $c$ for such $k$ and $\ell$ where $w_{k,l} = 1$, and $\hat{c}_{k,\ell} = 0$ otherwise. 
$c^L$ and $c^U$ are bounds on the value of variable $c$.
The value of $a$ is given by Constraint~(\ref{eq:mdt:5}).

This is an exact formulation that requires infinite sums and an infinite number of constraints. 
However, by restricting the set of all possible positions $\ell$ to a finite set $\{P_L, \ldots, P_U\}$ we get a lower bound approximation. 
Following the approach in~\cite{Kolodziej2013} we can extend the lower bound formulation to compute an upper bound:

{\small
\begin{subequations}\label{eq:mdtub}
\begin{flalign}
\textrm{Constraints}\;(\ref{eq:mdt:1}),(\ref{eq:mdt:3}),(\ref{eq:mdt:4}) \nonumber\\
\smash[b]{\sum_{\ell \in \{P_L, \ldots, P_U\}}\sum_{k=0}^9} 10^\ell \cdot k \cdot w_{k,\ell} + \Delta b & = b \label{eq:mdtub:1}\\
0 \leq \Delta b & \leq 10^{P_L} \label{eq:mdtub:2}\\
%a = \sum_{\ell \in \mathbb{Z}}\sum_{k=0}^9 10^\ell\cdot k \cdot \hat{c}_{k,\ell} + \sum_{k=0}^1 10^{P_L} \cdot k \cdot \tilde{c}_k \\
%c^L \cdot \tilde{w}_k \leq \tilde{c}_k \leq c^U \cdot \tilde{w}_k \\
%c = \sum_{k=0}^1 \tilde{c}_k \\
%\sum_{k=0}^1 \tilde{w}_k = 1 \\
%0 \leq \tilde{w}_k \leq 1
\smash[tb]{\sum_{\ell \in \{P_L, \ldots, P_U\}}\sum_{k=0}^9} 10^\ell\cdot k \cdot \hat{c}_{k,\ell} + \Delta a & = a \\
c^L \cdot \Delta b \leq \Delta a & \leq c^U \cdot \Delta b \label{eq:mdtub:3}\\
\left(c - c^U\right)\cdot 10^{P_L} + c^U \cdot \Delta b& \leq \Delta a  \label{eq:mdtub:4}\\
\left(c - c^L\right)\cdot 10^{P_L} + c^L \cdot \Delta b & \geq \Delta a \label{eq:mdtub:5}
\end{flalign}
\end{subequations}}

Here, $\Delta b$ is assigned to every discretized variable $b$ allowing it to take up the value between two discretization points created due to the minimal value of $\ell$ (Constraints~(\ref{eq:mdtub:1})--(\ref{eq:mdtub:2})).
Similarly, we allow the product variable $a$ to be increased with variable $\Delta a = \Delta b \cdot c$.
To approximate the product of the delta variables, we use the McCormick envelope defined by Constraints~(\ref{eq:mdtub:3})--(\ref{eq:mdtub:5}).

\section{\uppercase{Mathematical Programs for Approximating Maxmin Strategies}}\label{sec:method}

\noindent We now state the mathematical programs for approximating maxmin strategies.
The main idea is to add bilinear constraints into the sequence form LP to restrict to imperfect recall strategies.
We formulate an exact bilinear program, followed by the approximation of bilinear terms using MDT.

\subsection{Exact Bilinear Sequence Form Against A-loss Recall Opponent}
\label{sec:approx-bilinseq}
{\small
\begin{subequations}\label{eq:bilinseq}
\begin{align}
\max_{x,r,v}  \; &v(root, \emptyset) && \label{eq:bilinseq:obj}\\
s.t. \qquad r(\emptyset) & = 1 \label{eq:bilinseq:rinit}&&\\
0\leq r(\sigma) & \leq 1 &&\forall \sigma \in \Sigma_1 \label{eq:bilinseq:2}\\
\sum_{a \in \calA(I)}r(\sigma a) & = r(\sigma) &&\forall \sigma \in \Sigma_1, \forall I \in \infset{1}{\sigma_1}\label{eq:bilinseq:4}\\
\sum_{a \in \calA(I)} x(a) & = 1 &&\forall I \in \calI_1^{IR} \label{eq:bilinseq:x1}\\
0 \leq x(a) & \leq 1  &&\forall I \in \calI_1^{IR}, \forall a \in \calA(I)\label{eq:bilinseq:behavbounds}\\
r(\sigma)\cdot x(a)  & = r(\sigma a) &&\forall I \in \calI_1^{IR}, \forall a \in \calA(I),\nonumber\\&&& \label{eq:bilinseq:bilin}\forall \sigma \in \seq{1}{I}\\
\shortintertext{\begin{equation}\displaystyle\sum_{\sigma_1 \in \Sigma_1}\!\!\!g(\sigma_1,\sigma_2a)r_1(\sigma_1) +\!\!\!\!\!\!\!\!\! \sum_{I' \in \infset{2}{\sigma_2 a}}\!\!\!\!\!\!\!\!\!v(I'\!\!, \sigma_2a) \geq v(I, \sigma_2)\nonumber\end{equation}} 
\shortintertext{\centering\begin{equation}\forall I \in \calI_2, \forall \sigma_2 \in \seq{2}{I}, \forall a \in \calA(I)\label{eq:bilinseq:br}\end{equation}}
\end{align}
\end{subequations}
}
Constraints \eqref{eq:bilinseq:obj}--\eqref{eq:bilinseq:br} represent a bilinear reformulation of the sequence-form LP due to~\cite{vonStengel96} applied to the information set structure of an imperfect recall game $G$.
The objective of player 1 is to find a strategy that maximizes the expected utility of the game.
The strategy is represented by variables $r$ that assign probability to a sequence: $r(\sigma_1)$ is the probability that $\sigma_1 \in \Sigma_1$ will be played assuming that information sets, in which actions of sequence $\sigma_1$ are applicable, are reached due to player 2.
Probabilities $r$ must satisfy so-called network flow Constraints~(\ref{eq:bilinseq:2})--(\ref{eq:bilinseq:4}).
Finally, a strategy of player 1 is constrained by the best-responding opponent that selects an action minimizing the expected value in each $I \in \calI_{2}$ and for each $ \sigma_2 \in \seq{2}{I}$ that was used to reach $I$~(Constraint~(\ref{eq:bilinseq:br})).
These constraints ensure that the opponent plays the best response in the coarsest perfect recall refinement of $G$ and thus also in $G$ due to Lemma~\ref{lem:pureBR}.
The expected utility for each action is a sum of the expected utility values from immediately reachable information sets $I'$ and from immediately reachable leafs.
For the latter we use generalized utility function $g : \Sigma_1 \times \Sigma_2 \rightarrow \mathbb{R}$ defined as $g(\sigma_1, \sigma_2) = \sum_{z \in \calZ | \seq{1}{z} = \sigma_1 \wedge \seq{2}{z} = \sigma_2} u(z)\calC(z)$.

In imperfect recall games multiple $\sigma_i$ can lead to some imperfect recall information set $I_i \in \calI_i^{IR} \subseteq \calI_i$; hence, realization plans over sequences do not have to induce the same behavioral strategy for $I_i$.
Therefore, for each $I_i \in \calI_i^{IR}$ we define behavioral strategy $x(a)$ for each $a\in \calA(I_i)$ (Constraints~\eqref{eq:bilinseq:x1}--\eqref{eq:bilinseq:behavbounds}).
To ensure that the realization probabilities induce the same behavioral strategy in $I_i$, we add bilinear constraint $r(\sigma_i a) = x(a)\cdot r(\sigma_i)$ (Constraint~\eqref{eq:bilinseq:bilin}).

\subsubsection{Player 2 without A-Loss Recall.}
\noindent If player 2 does not have A-loss recall, the mathematical program must use each pure best response of player 2 $\pi_2 \in \Pi_2$ as a constraint as follows:
{\small
\begin{subequations}\label{eq:bilinseq_gir}
\begin{align}
\max_{x,r,v}  \; v(root) && \label{eq:bilinseq_gir:obj}\\
%s.t. \qquad r(\emptyset) & = 1 \label{eq:bilinseq_gir:rinit}&&\\
%0\leq r(\sigma_1) & \leq 1 &&\forall \sigma_1 \in \Sigma_1 \label{eq:bilinseq_gir:2}\\
%\sum_{a \in \calA(I_1)}r(\sigma_1a) & = r(\sigma_1) &&\forall \sigma_1 \in \Sigma_1, \forall I_1 \in \infset{1}{\sigma_1}\label{eq:bilinseq_gir:4}\\
%\sum_{a \in \calA(I_1)} x(I_1,a) & = 1 &&\forall I_1 \in \calI_1^{IR} \label{eq:bilinseq_gir:x1}\\
%0 \leq x(I_1,a) & \leq 1  &&\forall I_1 \in \calI_1^{IR}, \forall a \in \calA(I_1)\label{eq:bilinseq_gir:behavbounds} \\
%r(\sigma_1)\cdot x(I_1,a)  & = r(\sigma_1a) &&\forall I_1 \in \calI_1^{IR}, \forall \sigma_1 \in \seq{1}{I_1},\label{eq:bilinseq_gir:bilin}\\&&&\nonumber \forall a \in \calA(I_1)\\
\text{Constraints~\eqref{eq:bilinseq:rinit}--\eqref{eq:bilinseq:behavbounds}}\nonumber\\
\sum_{z \in \calZ \;|\; \pi_2(z) = 1}\!\!\!\!\!\!\!\!u(z)\calC(z)r(\seq{1}{z}) & \geq v(root) &&\forall \pi_2 \in \Pi_2\label{eq:bilinseq_gir:br}
\end{align}
\end{subequations}}
Since the modification does not change the parts of the program related to the approximation of strategies of player 1, all the following approximation methods, theorems, and the branch-and-bound algorithm are applicable for general imperfect recall games without absentminded players.

\subsection{Upper Bound MILP Approximation}
\noindent The upper bound formulation of the bilinear program follows the MDT example and uses ideas similar to Section~\ref{sec:MDT}.
In accord with the MDT, we represent every variable $x(a)$ using a finite number of digits.
Binary variables $w_{k,\ell}^{I_1,a}$ correspond to $w_{k,\ell}$ variables from the example shown in Section~\ref{sec:MDT} and are used for the digit-wise discretization of $x(a)$.
Finally, $\hat{r}(\sigma_1)_{k,\ell}^a$ correspond to $\hat{c}_{k,\ell}$ variables used to discretize the bilinear term $r(\sigma_1 a)$.
In order to allow variable $x(a)$ to attain an arbitrary value from $[0,1]$ interval using a finite number of digits of precision, we add an additional real variable $0 \leq \Delta x(a) \leq 10^{-P}$ that can span the gap between two adjacent discretization points.
Constraints~\eqref{eq:ub-milp:x} and \eqref{eq:ub-milp:deltax-bound} describe this loosening.
Variables $\Delta x(a)$ also have to be propagated to bilinear terms $r(\sigma_1) \cdot x(a)$ involving $x(a)$.
We cannot represent the product $\Delta r(\sigma_1 a) = r(\sigma_1) \cdot \Delta x(a)$ exactly and therefore we give bounds based on the McCormick envelope (Constraints~\eqref{eq:ub-milp:deltar-bound-1}--\eqref{eq:ub-milp:deltar-bound-2}).
{\small
\begin{subequations}\label{eq:ub-milp}
\begin{align}
\max_{x,r,v} & \; v(root,\emptyset) &&\label{eq:ub-milp:obj}\\
\shortintertext{s.t. \quad\textrm{Constraints\ \eqref{eq:bilinseq:rinit}} \textrm{ - \eqref{eq:bilinseq:behavbounds} , \ \eqref{eq:bilinseq:br}}}
w^{I,a}_{k,\ell} & \in \{0,1\} &&\forall I \in \calI_1^{IR}, \forall a \in \calA(I),\nonumber\\
&&& \forall k \in 0..9,\forall \ell \in -P..0\label{eq:ub-milp:wb}\\
\smash{\sum_{k=0}^9} w_{k,\ell}^{I,a} & = 1  &&\forall I \in \calI_1^{IR}, \forall a \in \calA(I), \nonumber\\
&&&\forall \ell \in -P..0\label{eq:ub-milp:w}\\
\shortintertext{\begin{equation}\smash{\sum_{\ell = -P}^{0}\sum_{k=0}^9} 10^\ell \cdot k \cdot w^{I,a}_{k,\ell} + \Delta x(a) = x(a)\nonumber\label{eq:ub-milp:x0}\end{equation}}\stepcounter{equation}
&&&\forall I \in \calI_1^{IR}, \forall a \in \calA(I) \tag{\ref{eq:ub-milp:x0}}\label{eq:ub-milp:x} \\
0 \leq \Delta x(a) & \leq 10^{-P} &&\forall I \in \calI_1^{IR}, \forall a \in \calA(I) \label{eq:ub-milp:deltax-bound}\\
0 \leq \hat{r}(\sigma)^a_{k,\ell} & \leq w^{I,a}_{k,\ell} &&\forall I \in \calI_1^{IR}, \forall a \in \calA(I),\label{eq:deltas:hatr-bound}\\&&&\nonumber\forall\sigma \in \seq{1}{I},\forall \ell \in -P..0  \\
\smash{\sum_{k=0}^9}\hat{r}(\sigma)^a_{k,\ell} & = r(\sigma) &&   \forall I \in \calI_1^{IR}, \forall\sigma \in \seq{1}{I}\nonumber\\&&&\forall \ell \in -P..0\label{eq:deltas:hatr-sum}\\
\shortintertext{\begin{equation}\smash{\sum_{\ell = -P}^{0}\sum_{k=0}^9} 10^\ell \cdot k \cdot \hat{r}(\sigma)^a_{k,\ell} + \Delta r(\sigma a) = r(\sigma a)\nonumber\end{equation}}
&&&\forall I \in \calI_1^{IR}, \forall a \in \calA(I),\\&&& \forall\sigma \in \seq{1}{I} \\
\shortintertext{$\left(r(\sigma)-1\right)\cdot 10^{-P} + \Delta x(a) \leq \Delta r(\sigma a)  \leq 10^{-P}\cdot r(\sigma)$}
&&&\forall I \in \calI_1^{IR}, \forall a \in \calA(I), \nonumber\\&&&\forall\sigma \in \seq{1}{I}\label{eq:ub-milp:deltar-bound-1}\\
0 \leq \Delta r(\sigma a) &\leq \Delta x(a) && \forall I \in \calI_1^{IR}, \forall\sigma \in \seq{1}{I}, \nonumber\\&&&\forall a \in \calA(I)\label{eq:ub-milp:deltar-bound-2}
\end{align}
\end{subequations}}
Due to this loose representation of $\Delta r(\sigma_1 a)$, the reformulation of bilinear terms is no longer exact and this MILP therefore yields an upper bound of the bilinear sequence form program~\eqref{eq:bilinseq}. Note that the MILP has both the number of variables and the number of constraints bounded by $O(|\calI|\cdot|\Sigma|\cdot P)$, where $|\Sigma|$ is the number of sequences of both players. The number of binary variables is equal to $10\cdot|\calI_1^{IR}|\cdot \calA_1^{max}\cdot P$, where $\calA_1^{max} = \max_{I \in \calI_1}|\calA_1(I)|$.

\subsection{Theoretical Analysis of the Upper Bound MILP}
\label{subsec:reconstruct}

\noindent The variables $\Delta x(a)$ and $\Delta r(\sigma)$ ensure that the optimal value of the MILP is an upper bound on the value of the bilinear program. The drawback is that the realization probabilities do not have to induce a valid strategy in the imperfect recall game $G$, i.e. if $\sigma_1, \sigma_2$ are two sequences leading to an imperfect recall information set $I_1 \in \calI_1^{IR}$ where action $a \in \calA(I_1)$ can be played, $r(\sigma_1 a) / r(\sigma_1)$ need not equal $r(\sigma_2 a) / r(\sigma_2)$. We will show that it is possible to create a valid strategy in $G$ which decreases the value by at most $\epsilon$, while deriving bound on this $\epsilon$.

Let $\beta^1(I_1),\ldots,\beta^k(I_1)$ be behavioral strategies in the imperfect recall information set $I_1 \in \calI_1^{IR}$  corresponding to realization probabilities of continuations of sequences $\sigma^1,\ldots,\sigma^k$ leading to $I_1$. These probability distributions can be obtained from the realization plan as $\beta^j(I_1,a) = r(\sigma^j a) / r(\sigma^j)$ for $\sigma^j \in \seq{1}{I_1}$ and $a \in \calA(I_1)$. We will omit the information set and use $\beta(a)$ whenever it is clear from the context. If the imperfect recall is violated in $I_1$, $\beta^j(a)$ may not be equal to $\beta^l(a)$ for some $j$, $l$ and action $a \in \calA(I_1)$.
\begin{proposition} It is always possible to construct a strategy $\beta(I_1)$ such that $\| \beta(I_1) - \beta^j(I_1) \|_1 \leq |\calA(I_1)| \cdot 10^{-P}$ for every $j$.\footnote{The L1 norm is taken as $\| x_1 - x_2 \|_1 = \sum_{a \in \calA(I_1)} | x_1(a) - x_2(a) |$}
\label{prop:recon}
\end{proposition}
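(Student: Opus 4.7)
The plan is to extract an explicit algebraic formula for $\beta^j(a)$ in terms of the MILP variables, show that every $\beta^j(a)$ is trapped in a common interval of width $10^{-P}$, and then take $\beta$ to be any one of the candidates $\beta^j$ (say $\beta^1$). The key insight is that the variables $x(a)$ and $\Delta x(a)$ are shared across all sequences $\sigma^j$ leading to $I_1$, so they act as a common ``center'' around which the $\beta^j$ cluster.

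First I would unpack Constraints \eqref{eq:deltas:hatr-bound}--\eqref{eq:deltas:hatr-sum}. Because the $w^{I,a}_{k,\ell}$ are binary with $\sum_k w^{I,a}_{k,\ell}=1$, for every $\ell$ there is exactly one index $k^\star(\ell)$ with $w^{I,a}_{k^\star(\ell),\ell}=1$; combined with $0\le\hat{r}(\sigma)^a_{k,\ell}\le w^{I,a}_{k,\ell}$ and $\sum_k \hat{r}(\sigma)^a_{k,\ell}=r(\sigma)$, this forces $\hat{r}(\sigma)^a_{k,\ell}=r(\sigma)$ when $k=k^\star(\ell)$ and $0$ otherwise. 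Plugging this into the discretization sum and using Constraint \eqref{eq:ub-milp:x} yields
\begin{equation*}
r(\sigma a) \;=\; r(\sigma)\bigl(x(a)-\Delta x(a)\bigr) \;+\; \Delta r(\sigma a).
\end{equation*}
For any $\sigma^j$ with $r(\sigma^j)>0$ this gives $\beta^j(a)=\bigl(x(a)-\Delta x(a)\bigr)+\Delta r(\sigma^j a)/r(\sigma^j)$. Constraints \eqref{eq:ub-milp:deltar-bound-1}--\eqref{eq:ub-milp:deltar-bound-2} imply $0\le \Delta r(\sigma^j a)\le 10^{-P}\cdot r(\sigma^j)$, so the correction term lies in $[0,10^{-P}]$, independently of $j$.

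Consequently every $\beta^j(a)$ lies in the single interval $\bigl[x(a)-\Delta x(a),\,x(a)-\Delta x(a)+10^{-P}\bigr]$, so $|\beta^j(a)-\beta^l(a)|\le 10^{-P}$ for all $j,l$ and every action $a\in\calA(I_1)$. Taking $\beta(I_1):=\beta^1(I_1)$ (which is itself a valid distribution: non-negativity follows from $0\le r(\sigma^1 a)\le r(\sigma^1)$ and $\sum_a \beta^1(a)=1$ follows from the flow constraint \eqref{eq:bilinseq:4}), summing over actions gives
\begin{equation*}
\bigl\|\beta(I_1)-\beta^j(I_1)\bigr\|_1 \;=\; \sum_{a\in\calA(I_1)}\bigl|\beta^1(a)-\beta^j(a)\bigr| \;\le\; |\calA(I_1)|\cdot 10^{-P},
\end{equation*}
which is exactly the claim.

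The only subtle step I anticipate is the degenerate case where some $r(\sigma^j)=0$: then $\beta^j$ is not even defined from the realization plan. The clean way around this is to restrict attention to sequences with $r(\sigma^j)>0$ when defining $\beta^j$, since sequences with zero realization probability do not contribute to the expected utility and one is free to assign any behavioral distribution to $I_1$ along those branches; in particular, setting $\beta^j:=\beta^1$ for such $j$ keeps the $L_1$ bound trivially satisfied. Everything else is routine algebra on the MDT constraints.
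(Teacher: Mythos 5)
Your proposal is correct and follows essentially the same route as the paper's proof: both isolate the exact discretized part (showing it factors as $r(\sigma)\cdot(x(a)-\Delta x(a))$ because the binary $w^{I,a}_{k,\ell}$ force the $\hat{r}$ variables), observe that the only slack is $0 \leq \Delta r(\sigma a) \leq 10^{-P}\cdot r(\sigma)$, conclude the per-action deviation is at most $10^{-P}$, and take one of the $\beta^j$ as the corrected strategy. Your explicit treatment of the $r(\sigma^j)=0$ case is a small addition the paper leaves implicit, but it does not change the argument.
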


We now connect the distance of a corrected strategy $\beta(I_1)$ from a set of behavioral strategies $\beta^1(I_1),\ldots,\beta^k(I_1)$ in $I_1 \in \calI_1^{IR}$  to the expected value of the strategy.

\begin{theorem}\label{th:milperror}
\label{lemma:alg-bound}
The error of the Upper Bound MILP is bounded by
{\small
\[ \epsilon = 10^{-P} \cdot d \cdot \calA_1^{max} \cdot \frac{v_{max}(\emptyset) - v_{min}(\emptyset)}{2} \text{,} \]}
where $d$ is the maximum number of player 1's imperfect recall information sets encountered on a path from the root to a terminal node, $\calA_1^{max} = \max_{I_1 \in \calI_1^{IR}} |\calA(I_1)|$ is the branching factor and $v_{min}(\emptyset)$, $v_{max}(\emptyset)$ are the lowest and highest utilities for player 1 in the whole game, respectively.
\end{theorem}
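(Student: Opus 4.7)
The plan is to convert an optimal MILP solution into a valid imperfect-recall behavioral strategy $\beta$ of player~1 and to show that $V(\beta)\geq V_{\mathrm{MILP}}-\epsilon$. Since the MILP is a relaxation, $V_{\mathrm{MILP}}\geq V^*\geq V(\beta)$, and the theorem follows immediately. By Lemma~1 it suffices to prove $V(\beta,\pi)\geq V_{\mathrm{MILP}}-\epsilon$ for every pure strategy $\pi$ of player~2, since a best response to $\beta$ can be taken pure.

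Given the optimal MILP realization plan $r$, I would first apply Proposition~\ref{prop:recon} at every $I_1\in\calI_1^{IR}$ to obtain a corrected strategy $\beta(I_1)$ with $\|\beta(I_1)-\beta^j(I_1)\|_1\leq |\calA(I_1)|\cdot 10^{-P}$ for every $\sigma^j\in\seq{1}{I_1}$. Next I would pass to the coarsest perfect-recall refinement of $G$ for player~1: there, $r$ induces a valid behavioral strategy $\tilde\beta$ (at the copy of $I_1$ reached by $\sigma^j$, play $\beta^j$), while $\beta$ induces the strategy $\hat\beta$ that plays the single corrected $\beta(I_1)$ at \emph{every} copy of $I_1$. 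Because the refinement does not alter the game tree nor player~2's information structure, $V(\hat\beta,\pi)$ in the refinement equals $V(\beta,\pi)$ in $G$, while $V(\tilde\beta,\pi)$ equals $\sum_{z\,:\,\pi(\seq{2}{z})=1} u(z)\calC(z)\,r(\seq{1}{z})$, which the MILP constraints~\eqref{eq:bilinseq:br} (together with Lemma~\ref{lem:pureBR} in the A-loss case) or~\eqref{eq:bilinseq_gir:br} (in the generic case) bound below by $V_{\mathrm{MILP}}$. So it suffices to bound $|V(\hat\beta,\pi)-V(\tilde\beta,\pi)|\leq\epsilon$ for every pure $\pi$.

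The main step is a hybrid argument that morphs $\tilde\beta$ into $\hat\beta$ one copy $\tilde I_1$ of an imperfect-recall information set at a time, processed in topological (ancestors-first) order. For a single such switch, with all other strategies fixed the expected utility is linear in $\beta(\tilde I_1,\cdot)$; and because $\sum_a\Delta\beta(\tilde I_1,a)=0$, shifting all continuation values by the midpoint $(v_{max}(\emptyset)+v_{min}(\emptyset))/2$ yields
\[
|\Delta V|\;\leq\; p_{\tilde I_1}\cdot\|\Delta\beta(\tilde I_1)\|_1\cdot\frac{v_{max}(\emptyset)-v_{min}(\emptyset)}{2}\;\leq\; p_{\tilde I_1}\cdot\calA_1^{max}\cdot 10^{-P}\cdot\frac{v_{max}(\emptyset)-v_{min}(\emptyset)}{2},
\]
where $p_{\tilde I_1}$ is the reach probability of $\tilde I_1$ under the current hybrid. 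Telescoping over all copies, the total error is at most $\calA_1^{max}\cdot 10^{-P}\cdot(v_{max}(\emptyset)-v_{min}(\emptyset))/2$ multiplied by $\sum_{\tilde I_1}p_{\tilde I_1}$, which is the expected number of player~1's imperfect-recall information sets visited on a playthrough and is thus bounded by the deterministic constant $d$ for any strategy profile. This yields exactly $\epsilon$, completing the argument.

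The main obstacle I anticipate is the bookkeeping around the evolving reach probabilities: the $p_{\tilde I_1}$ depend on the partially-corrected hybrid, so the topological ordering is essential to decouple each step's bound from subsequent corrections, and one must carefully observe that summing $p_{\tilde I_1}$ over the copies of a single $I_1$ in the refinement recovers the reach of $I_1$ in $G$, after which the bound $d$ on the number of imperfect-recall information sets per path takes over. The remainder is straightforward accounting around the pointwise guarantee of Proposition~\ref{prop:recon}.
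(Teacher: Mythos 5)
Your argument is correct, and it reaches the bound by a genuinely different aggregation than the paper. The paper's proof fixes the same two ingredients you use -- Proposition~\ref{prop:recon}'s bound $\|\beta(I_1)-\beta^j(I_1)\|_1\leq|\calA(I_1)|10^{-P}$ and the ``half-range times L1'' estimate, which it isolates as Lemma~\ref{lemma:infset-bound} and which you re-derive inline via the midpoint shift -- but it then runs a worst-case bottom-up induction on the game tree: a loss function $L(h)$ with a case analysis over leaf, chance, player-2, perfect-recall and imperfect-recall nodes, where every node type is bounded pessimistically by $\max_{a}L(h\cdot a)$ and only imperfect-recall nodes add the constant $\tfrac{v_{diff}(\emptyset)}{2}10^{-P}\calA_1^{max}$; the factor $d$ then comes from counting imperfect-recall sets along the single maximizing path. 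You instead fix a pure response $\pi$ up front (via Lemma~1), observe that the MILP realization plan is a valid strategy $\tilde\beta$ of the coarsest perfect-recall refinement for player~1 with value at least $v(root,\emptyset)$ against every such $\pi$ (the paper states this only informally as Observation~1; note that in the A-loss case the direction you actually need is that any pure strategy of $G$ embeds into player~2's refinement, which is immediate and does not really require Lemma~\ref{lem:pureBR}), and then telescope over refined copies of imperfect-recall sets in ancestors-first order, bounding each switch by reach probability times L1 distance times half-range and the sum of reach probabilities by $d$ via the no-absentmindedness property. What each buys: the paper's induction never has to track reach probabilities or fix the opponent, at the cost of a slightly awkward treatment of the best-responding player~2 inside the recursion; your hybrid argument handles the opponent cleanly, makes the ``valid in the refinement, value at least the MILP optimum'' step explicit, and bounds the error by the \emph{expected} number of imperfect-recall sets visited (which is never worse, and in principle tighter, than the per-path maximum $d$), yielding the same $\epsilon$. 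The only points to tidy up are routine: define $\beta^j$ arbitrarily when $r(\sigma^j)=0$, and note that the ancestors-first order exists because the refined sets of player~1 are partially ordered by their (unique) preceding sequences.
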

The idea of the proof is to bound the error in every $I_1 \in \calI_1^{IR}$ and propagate the error in a bottom-up fashion.

The error of Upper Bound MILP is bounded if the precision of all approximations of bilinear terms is $P$.
However, we can increase the precision for each term separately and thus design the following iterative algorithm (termed simply as \textsc{MILP} in the experiments): (1) start with the precision set to $0$ for all bilinear terms, (2) for each approximation of a bilinear term calculate the current error contribution (the difference between $\Delta r(\sigma_1 a)$ and $r(\sigma_1) \Delta x(a)$ multiplied by the expected utility) and increase the precision only for the term that contributes to the overall error the most.
Once the term with maximal error has already reached maximal precision $P$, Theorem~\ref{th:milperror} guarantees us that we are $\varepsilon$ close to the optimal solution.
Our algorithm in the following section simultaneously increases the precision for approximating bilinear terms together with searching for optimal values for binary variables.

\section{\uppercase{Branch-and-Bound Algorithm}}\label{sec:bnb}
\noindent We now introduce a branch-and-bound (\textsc{BnB}) search for approximating maxmin strategies, that exploits the observation below and thus improves the performance compared to the previous MILP formulations. Additionally, we provide bounds on the overall runtime as a function of the desired precision.

The \textsc{BnB} algorithm works on the linear relaxation of the Upper Bound MILP and searches the \textsc{BnB} tree in the best first search manner.
In every node $n$, the algorithm solves the relaxed LP corresponding to node $n$, heuristically selects the information set $I$ and action $a$ contributing to the current approximation error the most, and creates successors of $n$ by restricting the probability $\beta_1(I, a)$ that $a$ is played in $I$. 
The algorithm adds new constraints to LP depending on the value of $\beta_1(I, a)$ by constraining (and/or introducing new) variables $w_{k, l}^{I_1,a}$ and creating successors of the \textsc{BnB} node in the search tree.
Note that $w_{k, l}^{I_1,a}$ variables correspond to binary variables in the MILP formulation.
This way, the algorithm simultaneously searches for the optimal approximation of bilinear terms as well as the assignment for binary variables.
The algorithm terminates when $\epsilon$-optimal strategy is found (using the difference of the global upper bound and the lower bound computed as described in Observation~1 below). 

\paragraph{Observation 1.} Even if the current assignment to variables $w_{k,\ell}^{I_1,a}$ is not feasible (they are not set to binary values), the realization plan produced is valid in the perfect recall refinement. We can fix it in the sense of Proposition~\ref{prop:recon} and use it to estimate the lower bound for the \textsc{BnB} subtree rooted in the current node without a complete assignment of all $w_{k,\ell}^{I_1,a}$ variables to either 0 or 1.

\begin{algorithm}[t]
{\small
\DontPrintSemicolon
\SetKwInOut{Input}{input}
\SetKwInOut{Output}{output}
\SetKwInOut{Parameter}{parameters}
\SetKwProg{BFunction}{function}{}{}
\SetKwFunction{CreateNode}{CreateNode}
\SetKwFunction{ExtractStrategy}{ReconstructStrategy}
\SetKwFunction{SelectAction}{SelectAction}
\SetKwFunction{SelectInformationSet}{SelectInformationSet}
\SetKwFunction{Solve}{Solve}
\SetKwFunction{BestResponse}{BestResponse}

\Input{Initial LP relaxation $LP_0$ of Upper Bound MILP using a $P=0$ discretization}
\Output{$\epsilon$-optimal strategy for a player having imperfect recall}
\Parameter{Bound on maximum error $\epsilon$, precision bounds for $x(a)$ variables $P_{max}(I_1,a)$}

\BlankLine

$\mathsf{fringe} \gets \lbrace$\CreateNode{$LP_0$}$\rbrace$ \;
$\mathsf{opt} \gets (\textsf{nil},-\infty,\infty)$ \;
\While{$\mathsf{fringe} \neq \varnothing$}{
  $(LP,lb,ub) \gets \argmax_{n \in \mathsf{fringe}} n.\mathsf{ub}$ \label{alg:bnb:node-selection:first} \;
  $\mathsf{fringe} \gets \mathsf{fringe} \setminus (LP,lb,ub)$ \label{alg:bnb:node-selection:last} \;
  \If{$\mathsf{opt}.\mathsf{lb} \geq n.\mathsf{ub}$}{
     \Return{\ExtractStrategy{$\mathsf{opt}$}} \label{alg:bnb:bestfound}
  }

    \If{$\mathsf{opt}.\mathsf{lb} < n.\mathsf{lb}$}{
      $\mathsf{opt} \gets n$ \label{alg:bnb:keepopt}
    }
    \BlankLine
    \If{$n.\mathsf{ub}-n.\mathsf{lb} \leq \epsilon$}{
      \Return{\ExtractStrategy{$\mathsf{opt}$}} \label{alg:bnb:eps}
    }
    \Else{
      $(I_1,a) \gets $ \SelectAction{$n$} \label{alg:bnb:selection} \;
      $P \gets $ number of digits of precision representing $x(a)$ in $LP$ \;
      $\mathsf{fringe} \gets \mathsf{fringe} \cup \lbrace$\CreateNode{$LP \cup \lbrace \sum_{k=0}^{\floor{\frac{a_{ub} + a_{lb}}{2}}_{-P}} w_{k,P}^{I_1,a} = 1 \rbrace$}$\rbrace$ \label{alg:bnb:left} \;
      $\mathsf{fringe} \gets \mathsf{fringe} \cup \lbrace$\CreateNode{$LP \cup \lbrace \sum_{k= \floor{\frac{a_{ub} + a_{lb}}{2}}_{-P}}^9 w_{k,P}^{I_1,a} = 1 \rbrace$}$\rbrace$ \label{alg:bnb:right} \;
      \If{\label{alg:bnb:Pmax}$P < P_{max}(I_1,a)$}{
        $\mathsf{fringe} \gets \mathsf{fringe} \cup \lbrace$\CreateNode{$LP \cup \lbrace w_{LP.x(a)_{-P},P}^{I_1,a}=1, $ introduce vars $ w_{0,P+1}^{I_1,a},\ldots,w_{9,P+1}^{I_1,a} $ and corresponding constraints from MDT $ \rbrace$}$\rbrace$ \label{alg:bnb:refine} \;
      }
    }
  
}
\Return{\ExtractStrategy{$\mathsf{opt}$}} \label{alg:bnb:reconstruct}

\BlankLine

\BFunction{\CreateNode{$LP$}}{
  $ub \gets $ \Solve{$LP$}\label{alg:bnb:cn:solve} \;
  $\beta_1 \gets $ \ExtractStrategy{$LP$}\label{alg:bnb:cn:extract}\;
  $lb \gets u_1(\beta_1, \BestResponse(\beta1))$ \label{alg:bnb:cn:lb}\;
  \Return{$(LP,lb,ub)$}
}
}

\caption{\textsc{BnB} algorithm}
\label{alg:bnb}
\end{algorithm}
Algorithm~\ref{alg:bnb} depicts the complete \textsc{BnB} algorithm.
It takes an LP relaxation of the Upper Bound MILP %(relaxing all the binary variables to belong to continuous interval $[0, 1]$) 
as its input. Initially, the maxmin strategy is approximated using 0 digits of precision after the decimal point (i.e. precision $P(I_1,a)=0$ for every variable $x(a)$). The algorithm maintains a set of active \textsc{BnB} nodes ($\textsf{fringe}$) and a candidate with the best guaranteed value $\textsf{opt}$.
The algorithm selects the node with the highest upper bound from $\textsf{fringe}$ at each iteration (lines~\ref{alg:bnb:node-selection:first}--\ref{alg:bnb:node-selection:last}). If there is no potential for improvement in the unexplored parts of the branch and bound tree, the current best solution is returned (line~\ref{alg:bnb:bestfound}) (upper bounds of the nodes added to the fringe in the future will never be higher than the current upper bound). Next, we check, whether the current solution has better lower bound than the current best, if yes we replace it (line \ref{alg:bnb:keepopt}). Since we always select the most promising node with respect to the upper bound, we are sure that if the lower bound and upper bound have distance at most $\epsilon$, we have found an $\epsilon$-optimal solution and we can terminate (line \ref{alg:bnb:eps}) (upper bounds of the nodes added to the fringe in the future will never be higher than the current upper bound). Otherwise, we heuristically select an action having the highest effect on the gap between the current upper and lower bound (line \ref{alg:bnb:selection}). We obtain the precision used to represent behavioral probability of this action. By default we add two successors of the current \textsc{BnB} node, each with one of the following constraints. $x(a) \leq \floor{\frac{a_{ub} + a_{lb}}{2}}_{-P}$ (line~\ref{alg:bnb:left}) and $x(a) \geq \floor{\frac{a_{ub} + a_{lb}}{2}}_{-P}$ (line~\ref{alg:bnb:right}), where $\floor{\cdot}_p$ is flooring of a number towards $p$ digits of precision and $a_{ub}$ and $a_{lb}$ are the lowest and highest allowed values of playing $x(a)$. This step performs binary halving restricting allowed values of $x(a)$ in current precision. Additionally, if the current precision is lower than the maximal precision $P_{max}(I_1, a)$ the gap between bounds may be caused by the lack of discretization points; hence, we add one more successor $\floor{v} \leq x(a) \leq \ceil{v}$, where $v$ is the current probability of playing $a$, while increasing the precision used for representing $x(a)$ (line~\ref{alg:bnb:refine}) (all the restriction to $x(a)$ in all 3 cases are done via $w_{k, l}^{I_1,a}$ variables). 

The function \texttt{CreateNode} computes the upper bound by solving the given LP (line \ref{alg:bnb:cn:solve}) and the lower bound, by using the heuristical construction of a valid strategy $\beta_1$ returning some convex combination of strategies found for $\sigma_1^k \in \seq{1}{I_1}$ (line \ref{alg:bnb:cn:extract}) and computing the expected value of $\beta_1$ against a best response to it.

Note that this algorithm allows us to plug-in custom heuristic for the reconstruction of strategies (line \ref{alg:bnb:cn:extract}) and for the action selection (line \ref{alg:bnb:selection}). 

\subsection{Theoretical Properties of the \textsc{BnB} Algorithm}
\label{sec:bnb:Pmax}
\noindent The \textsc{BnB} algorithm takes the error bound $\epsilon$ as an input. We provide a method for setting the $P_{max}(I_1,a)$ parameters appropriately to guarantee $\epsilon$-optimality. Finally, we provide a bound on the number of steps the algorithm needs to terminate.

\begin{theorem}
Let $P_{max}(I_1,a)$ be the maximum number of digits of precision used for representing variable $x(a)$ set as
{\small
\[ P_{max}(I_1,a) = \left\lceil \max_{h \in I_1} \log_{10} \frac{|\calA(I_1)| \cdot d \cdot v_{diff}(h)}{2 \epsilon} \right\rceil \text{,} \]}
where $v_{diff}(h) = v_{max}(h) - v_{min}(h)$.
With this setting Algorithm~\ref{alg:bnb} terminates and it is guaranteed to return an $\epsilon$-optimal strategy for player 1.
\label{th:prec}
\end{theorem}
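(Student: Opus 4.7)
The plan is to establish two claims: (a) whenever the algorithm terminates it returns an $\epsilon$-optimal strategy, and (b) the algorithm terminates after finitely many iterations. Claim (a) rests on the invariants maintained by \texttt{CreateNode}. Adding constraints to the LP can only tighten the relaxation of the Upper Bound MILP, so every value returned on line~\ref{alg:bnb:cn:solve} is an upper bound on the true maxmin value. The reconstruction on line~\ref{alg:bnb:cn:extract} produces a behavioral strategy feasible in $G$ (via Proposition~\ref{prop:recon} applied at each imperfect-recall information set), and line~\ref{alg:bnb:cn:lb} computes its value against a best response, which is a valid lower bound. The termination conditions on lines~\ref{alg:bnb:bestfound} and~\ref{alg:bnb:eps} (together with best-first selection on the upper bound) therefore guarantee a returned strategy within $\epsilon$ of the optimum.

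For claim (b) I would argue that the \textsc{BnB} tree is finite. At every internal node the algorithm produces at most three children: two from binary halving of the allowed interval for some $x(a)$ at its current precision $P$, and one (taken only if $P < P_{max}(I_1,a)$) that fixes the current digit and introduces one extra digit of precision. Along any root-to-leaf path and for any fixed action $a$, the refinement branch (line~\ref{alg:bnb:refine}) can occur at most $P_{max}(I_1,a)$ times, and between two consecutive refinements binary halving narrows the allowed digit range to a single value in $O(\log_2 10)$ splits. Since $|\calA|$ is finite and all $P_{max}(I_1,a)$ are finite, the branching and depth of the tree are bounded, so the search terminates.

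The main technical obstacle is showing that the chosen $P_{max}(I_1,a)$ actually forces the gap to drop below $\epsilon$ before the search exhausts the tree. The strategy is to generalize Theorem~\ref{th:milperror} to the per-information-set precision setting. At a node where every $x(a)$ is constrained to an interval of width $10^{-P_{max}(I_1,a)}$, Proposition~\ref{prop:recon} gives that the reconstructed behavioral strategy at $I_1$ is within $|\calA(I_1)|\cdot 10^{-P_{max}(I_1,a)}$ in $L^1$ of the realization-induced distributions, so the local expected-value error at $I_1$ containing history $h$ is at most $|\calA(I_1)|\cdot 10^{-P_{max}(I_1,a)}\cdot v_{diff}(h)/2$. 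Plugging in the definition of $P_{max}$, each such local contribution is at most $\epsilon/d$. Propagating bottom-up, as in the proof sketch of Theorem~\ref{th:milperror}, along any path through the tree one encounters at most $d$ imperfect-recall information sets, so the total error telescopes to at most $\epsilon$.

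Combining the pieces, at any fully constrained leaf the \texttt{CreateNode} call produces an upper and lower bound that differ by at most $\epsilon$, so either the test on line~\ref{alg:bnb:eps} fires at that leaf (returning an $\epsilon$-optimal incumbent) or the global pruning test on line~\ref{alg:bnb:bestfound} has already fired once the best-first search has reduced the fringe's maximal upper bound to within $\epsilon$ of the current incumbent. Either way the algorithm halts with an $\epsilon$-optimal strategy, proving the theorem.
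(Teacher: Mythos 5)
Your proposal is correct and follows essentially the same route as the paper's proof: termination by bounding the number of left/right and refinement branchings per variable $x(a)$ (so the \textsc{BnB} tree is finite with branching factor at most 3), and $\epsilon$-optimality by redoing the error propagation of Theorem~\ref{th:milperror} with the per-set precision $P_{max}(I_1,a)$, which makes each imperfect-recall correction cost at most $\epsilon/d$ and sums to $\epsilon$ over the at most $d$ such sets on any path. Your additional bookkeeping about the upper/lower-bound invariants of \texttt{CreateNode} is consistent with, and slightly more explicit than, the paper's argument.
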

The proof provides a bound on the error per node in every $I \in \calI^{IR}_i$ and propagates this bound through the game tree in a bottom up fashion.
\begin{theorem}
When using $P_{max}(I_1,a)$ from Theorem \ref{th:prec} for all $I_1 \in \calI_1$ and all $a \in \calA(I_1)$, the number of iterations of the \textsc{BnB} algorithm needed to find an $\epsilon$-optimal solution is in $O(3^{4S_1( \log_{10}(S_1 \cdot v_{diff}(\emptyset)) + 1)} 2^{-5S_1}\epsilon^{-5S_1})$, where $S_1 = |\calI_1|\calA_1^{max}$.
\end{theorem}
The proof derives a bound on the number of nodes in the \textsc{BnB} tree dependent on $\max_{I_1 \in \calI_i, a \in \calA_1(I_1)}P_{max}(I_1,a)$ and uses the formula from Theorem \ref{th:prec} to transform the bound to a function of $\epsilon$.

\section{\uppercase{Experiments}}
\noindent We now demonstrate the practical aspects of our main branch-and-bound algorithm (\textsc{BnB}) described in Section~\ref{sec:bnb} and the iterative MILP variant described in Section~\ref{sec:method}.
We compare the algorithms on a set of random games where player 2 has A-loss recall.
Both algorithms were implemented in Java, each algorithm uses a single thread, 8 GB memory limit. We use IBM ILOG CPLEX 12.6 to solve all LPs/MILPs.

\subsection{Random Games}
\noindent Since there is no standardized collection of benchmark EFGs, we use randomly generated games in order to obtain statistically significant results.
We randomly generate a perfect recall game with varying branching factor and fixed depth of 6. To control the information set structure, we use observations assigned to every action -- for player $i$, nodes $h$ with the same observations generated by all actions in history belong to the same information set.
In order to obtain imperfect recall games with a non-trivial information set structure, we run a random abstraction algorithm which randomly merges information sets with the same action count, which do not cause absentmindedness. We generate a set of experimental instances by varying the branching factor. Such games are rather difficult to solve since (1) information sets can span multiple levels of the game tree (i.e., the nodes in an information set often have histories with differing sizes) and (2) actions can easily lead to leafs with very differing utility values. We always devise an abstraction which results to A-loss recall for the minimizing player.

\subsection{Results}
\begin{table}[t]
\centering
\caption{Average runtime and standard error in seconds needed to solve at least 100 different random games in every setting, while increasing the b.f.with fixed depth of 6.
\label{tab:results}}{
\begin{tabular}{l|c|c}
Algs $\setminus$ $b.f.$ & 3 & 4 \\
\hline \textsc{MILP} & $157.87 \pm  61.47$& $459.09 \pm  75.95$\\
\textsc{BnB} &$9.52 \pm  3.78$ & $184.50 \pm  48.22$ \\
\end{tabular}
}
\end{table}
\noindent Table \ref{tab:results} reports the average runtime and its standard error in seconds for both algorithms over at least 100 different random games with increasing branching factor and fixed depth of 6.
We limited the runtime for 1 instance to  2~hours (in the Table we report results only for instances where both algorithms finished under 2 hours).
The \textsc{BnB} algorithm was terminated 17 and 30 times after 2 hours for the reported settings respectively, while MILP algorithm was terminated 19 and 34 times.
Note that the random games form an unfavorable scenario for both algorithms since the construction of the abstraction is completely random, which makes conflicting behavior in merged information sets common. As we can see, however, even in these scenarios we are typically able to solve games with approximately $5\cdot10^3$ states  in several minutes.

\section{\uppercase{Conclusions}}
\noindent We provide the first algorithm for approximating maxmin strategies in imperfect recall zero-sum extensive-form games without absentmindedness. % against an opponent having A-loss recall.
We give a novel mathematical formulation for computing approximate strategies by means of a mixed-integer linear program (MILP) that uses recent methods in the approximation of bilinear terms.
Next, we use a linear relaxation of this MILP and introduce a branch-and-bound search (\textsc{BnB}) that simultaneously looks for the correct solution of binary variables and increases the precision for the approximation of bilinear terms.
We provide guarantees that both MILP and \textsc{BnB} find an approximate optimal solution.
Finally, we show that the algorithms are capable of solving games of sizes far beyond toy problems (up to $5 \cdot 10^3$ states) typically within few minutes in practice. 

Results presented in this paper provide the first baseline algorithms for the class of imperfect recall games that are of a great importance in solving large extensive-form games with perfect recall.
As such, our algorithms can be further extended to improve the current scalability, e.g., by employing incremental strategy generation methods.

\section*{\uppercase{Acknowledgements}}
This research was supported by the Czech Science Foundation (grant no. 15-23235S) and by the Grant Agency of the Czech Technical University in Prague, grant No. SGS16/235/OHK3/3T/13. Computational resources were provided by the CESNET LM2015042 and the CERIT Scientific Cloud LM2015085, provided under the programme "Projects of Large Research, Development, and Innovations Infrastructures".

\bibliographystyle{apalike}
\bibliography{refs}

\begin{thebibliography}{}

\bibitem[Bo\v{s}ansk\'{y} et~al., 2014]{bosansky2014-jair}
Bo\v{s}ansk\'{y}, B., Kiekintveld, C., Lis\'{y}, V., and P\v{e}chou\v{c}ek, M.
  (2014).
\newblock {An Exact Double-Oracle Algorithm for Zero-Sum Extensive-Form Games
  with Imperfect Information}.
\newblock {\em {Journal of Artificial Intelligence Research}}, 51:829--866.

\bibitem[Bowling et~al., 2015]{bowling2015heads}
Bowling, M., Burch, N., Johanson, M., and Tammelin, O. (2015).
\newblock Heads-up limit hold’em poker is solved.
\newblock {\em Science}, 347(6218):145--149.

\bibitem[Gilpin and Sandholm, 2007]{gilpin2007}
Gilpin, A. and Sandholm, T. (2007).
\newblock {Lossless Abstraction of Imperfect Information Games}.
\newblock {\em Journal of the ACM}, 54(5).

\bibitem[Hoda et~al., 2010]{Hoda2010}
Hoda, S., Gilpin, A., Pe\~{n}a, J., and Sandholm, T. (2010).
\newblock {Smoothing Techniques for Computing Nash Equilibria of Sequential
  Games}.
\newblock {\em Mathematics of Operations Research}, 35(2):494--512.

\bibitem[Kaneko and Kline, 1995]{kaneko1995}
Kaneko, M. and Kline, J.~J. (1995).
\newblock {Behavior Strategies, Mixed Strategies and Perfect Recall}.
\newblock {\em International Journal of Game Theory}, 24:127--145.

\bibitem[Kline, 2002]{kline2002}
Kline, J.~J. (2002).
\newblock {Minimum Memory for Equivalence between Ex Ante Optimality and
  Time-Consistency}.
\newblock {\em Games and Economic Behavior}, 38:278--305.

\bibitem[Koller and Megiddo, 1992]{koller1992}
Koller, D. and Megiddo, N. (1992).
\newblock {The Complexity of Two-Person Zero-Sum Games in Extensive Form}.
\newblock {\em Games and Economic Behavior}, 4:528--552.

\bibitem[Koller et~al., 1996]{koller1996}
Koller, D., Megiddo, N., and von Stengel, B. (1996).
\newblock {Efficient Computation of Equilibria for Extensive Two-Person Games}.
\newblock {\em Games and Economic Behavior}, 14(2):247--259.

\bibitem[Koller and Milch, 2003]{koller2003multi}
Koller, D. and Milch, B. (2003).
\newblock Multi-agent influence diagrams for representing and solving games.
\newblock {\em Games and Economic Behavior}, 45(1):181--221.

\bibitem[Kolodziej et~al., 2013]{Kolodziej2013}
Kolodziej, S., Castro, P.~M., and Grossmann, I.~E. (2013).
\newblock Global optimization of bilinear programs with a multiparametric
  disaggregation technique.
\newblock {\em Journal of Global Optimization}, 57(4):1039--1063.

\bibitem[Kroer and Sandholm, 2014]{kroer2014extensive}
Kroer, C. and Sandholm, T. (2014).
\newblock {Extensive-Form Game Abstraction with Bounds}.
\newblock In {\em ACM conference on Economics and computation}.

\bibitem[Kroer and Sandholm, 2016]{kroer2014extensiveIR}
Kroer, C. and Sandholm, T. (2016).
\newblock {Imperfect-Recall Abstractions with Bounds in Games}.
\newblock In {\em EC}.

\bibitem[Kuhn, 1953]{kuhn1953}
Kuhn, H.~W. (1953).
\newblock {Extensive Games and the Problem of Information}.
\newblock {\em Contributions to the Theory of Games}, II:193--216.

\bibitem[Lanctot et~al., 2012]{lanctot2012}
Lanctot, M., Gibson, R., Burch, N., Zinkevich, M., and Bowling, M. (2012).
\newblock {No-Regret Learning in Extensive-Form Games with Imperfect Recall}.
\newblock In {\em {ICML}}.

\bibitem[Nash, 1950]{Nash1950}
Nash, J.~F. (1950).
\newblock {Equilibrium Points in n-person Games}.
\newblock {\em Proc. Nat. Acad. Sci. USA}, 36(1):48--49.

\bibitem[Piccione and Rubinstein, 1997]{piccione1997}
Piccione, M. and Rubinstein, A. (1997).
\newblock {On the Interpretation of Decision Problems with Imperfect Recall}.
\newblock {\em Games and Economic Behavior}, 20:3--24.

\bibitem[von Stengel, 1996]{vonStengel96}
von Stengel, B. (1996).
\newblock {Efficient Computation of Behavior Strategies}.
\newblock {\em Games and Economic Behavior}, 14:220--246.

\bibitem[Wichardt, 2008]{wichardt2008}
Wichardt, P.~C. (2008).
\newblock Existence of nash equilibria in finite extensive form games with
  imperfect recall: A counterexample.
\newblock {\em Games and Economic Behavior}, 63(1):366--369.

\bibitem[Zinkevich et~al., 2008]{zinkevich2008regret}
Zinkevich, M., Johanson, M., Bowling, M., and Piccione, C. (2008).
\newblock {Regret Minimization in Games with Incomplete Information}.
\newblock In {\em NIPS}.

\end{thebibliography}
%\newpage
\section*{APPENDIX}
\addtocounter{proposition}{-1}
\addtocounter{lemma}{0}
\addtocounter{theorem}{-3}

\begin{proposition} It is always possible to construct a strategy $\beta(I_1)$ such that $\| \beta(I_1) - \beta^j(I_1) \|_1 \leq |\calA(I_1)| \cdot 10^{-P}$ for every $j$.
\end{proposition}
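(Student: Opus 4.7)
The plan is to take the MILP's own $x(I_1,\cdot)$ variables as the reconstructed strategy $\beta(I_1)$ and to bound the coordinate-wise gap between $\beta$ and each $\beta^j$ using the MDT constraints directly. First I would note that $\beta(I_1) := x(I_1,\cdot)$ is automatically a valid probability distribution over $\calA(I_1)$: the Upper Bound MILP inherits Constraints~\eqref{eq:bilinseq:x1} and \eqref{eq:bilinseq:behavbounds}, which enforce $\sum_{a \in \calA(I_1)} x(a)=1$ and $x(a) \in [0,1]$. So this is a legitimate behavioral strategy at $I_1$ regardless of whether the imperfect recall is violated by the realization plan.

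The heart of the argument is to unpack the MDT machinery at $I_1$ in order to relate $r(\sigma a)/r(\sigma)$ to $x(a)$. Because the MILP requires $w_{k,\ell}^{I_1,a} \in \{0,1\}$ together with $\sum_{k=0}^{9} w_{k,\ell}^{I_1,a} = 1$ (Constraint~\eqref{eq:ub-milp:w}), for every fixed action $a$ and every position $\ell \in \{-P,\ldots,0\}$ there is a unique digit $k_\ell(a)$ with $w_{k_\ell(a),\ell}^{I_1,a}=1$. Constraint~\eqref{eq:deltas:hatr-bound} then forces $\hat r(\sigma)^a_{k,\ell}=0$ for every $k \neq k_\ell(a)$, and Constraint~\eqref{eq:deltas:hatr-sum} pins $\hat r(\sigma)^a_{k_\ell(a),\ell} = r(\sigma)$. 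Plugging this back into the MDT expression for $r(\sigma a)$ yields
\[
r(\sigma a) \;=\; r(\sigma)\cdot \tilde x(a) + \Delta r(\sigma a), \qquad \tilde x(a) := \sum_{\ell=-P}^{0} 10^\ell\, k_\ell(a) \;=\; x(a) - \Delta x(a).
\]

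With this closed form in hand, for any $\sigma^j \in \seq{1}{I_1}$ with $r(\sigma^j) > 0$ I would divide through to obtain $\beta^j(a) = \tilde x(a) + \Delta r(\sigma^j a)/r(\sigma^j)$. Constraint~\eqref{eq:ub-milp:deltar-bound-1} bounds $\Delta r(\sigma^j a) \leq 10^{-P}\cdot r(\sigma^j)$ and \eqref{eq:ub-milp:deltax-bound} gives $\Delta x(a) \in [0,10^{-P}]$, so both $\Delta x(a)$ and $\Delta r(\sigma^j a)/r(\sigma^j)$ live in $[0,10^{-P}]$. Therefore
\[
|\beta(I_1,a) - \beta^j(I_1,a)| \;=\; \bigl|\Delta x(a) - \Delta r(\sigma^j a)/r(\sigma^j)\bigr| \;\leq\; 10^{-P},
\]
and summing over the $|\calA(I_1)|$ actions yields the claimed inequality $\| \beta(I_1) - \beta^j(I_1)\|_1 \leq |\calA(I_1)| \cdot 10^{-P}$. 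For any $\sigma^j$ with $r(\sigma^j)=0$ the ratio is undefined and the behavior at $I_1$ along that sequence is immaterial to the realization plan, so I would simply set $\beta^j(I_1) := \beta(I_1)$ by convention, trivially fulfilling the bound.

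The main obstacle I anticipate is justifying cleanly the closed form $\hat r(\sigma)^a_{k_\ell(a),\ell} = r(\sigma)$ out of the joint interplay of \eqref{eq:ub-milp:w}, \eqref{eq:deltas:hatr-bound} and \eqref{eq:deltas:hatr-sum}, and handling the degenerate $r(\sigma)=0$ sequences without leaving an ambiguous $\beta^j$. Once the per-position MDT digit structure is nailed down, the remaining estimate is a single application of the triangle inequality on two quantities constrained to the interval $[0,10^{-P}]$.
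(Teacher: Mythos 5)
Your proof is correct and follows essentially the same route as the paper's: both unpack the MDT digit structure (binary $w^{I_1,a}_{k,\ell}$ plus the $\hat r$ constraints) to show that the exact part of $r(\sigma^j a)$ equals $r(\sigma^j)\cdot\bigl(x(a)-\Delta x(a)\bigr)$ for every $\sigma^j \in \seq{1}{I_1}$, so the induced behavioral probabilities can differ only through slack terms confined to $[0,10^{-P}]$, giving at most $10^{-P}$ per action. The only (immaterial) difference is the choice of corrected strategy --- you take $\beta(I_1)=x(I_1,\cdot)$ while the paper reuses one of the $\beta^j(I_1)$ --- and both choices yield the same $|\calA(I_1)|\cdot 10^{-P}$ bound.
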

\begin{proof}
Probabilities of playing action $a$ in $\beta^1,\ldots,\beta^k$ can differ by at most $10^{-P}$, i.e. $| \beta^j(a) - \beta^l(a)| \leq 10^{-P}$ for every $j,l$ and action $a \in \calA(I_1)$. This is based on the MDT we used to discretize the bilinear program. Let us denote
{\small
\begin{eqnarray}
\underline{r}(\sigma_1 a) &=& \sum_{l=-P}^{0} \sum_{k=0}^9 10^\ell \cdot k \cdot \hat{r}(\sigma_1)_{k,\ell}^a \\
\underline{x}(I_1,a) &=& \sum_{l=-P}^{0} \sum_{k=0}^9 10^\ell \cdot k \cdot w_{k,\ell}^{I_1,a} \text{.}
\end{eqnarray}}
Constraints~\eqref{eq:deltas:hatr-bound} and \eqref{eq:deltas:hatr-sum} ensure that $\underline{r}(\sigma_1 a) = r(\sigma_1) \cdot \underline{x}(I_1, a)$. The only way how the imperfect recall can be violated is thus in the usage of $\Delta r(\sigma_1 a)$. We know however that $\Delta r(\sigma_1 a) \leq 10^{-P} \cdot r(\sigma_1)$ which ensures that the amount of imbalance in $\beta^1,\ldots,\beta^k$  is at most $10^{-P}$. Taking any of the behavioral strategies $\beta^1,\ldots,\beta^k$ as the corrected behavioral strategy $\beta(I_1)$ therefore satisfies $\| \beta(I_1) - \beta^j(I_1) \|_1 \leq \sum_{a \in \calA(I_1)} 10^{-P} = |\calA(I_1)| \cdot 10^{-P}$.
\end{proof}

We now provide the technical proof of Theorem \ref{lemma:alg-bound}.
First, we connect the distance of a corrected strategy $\beta(I_1)$ from a set of behavioral strategies $\beta^1(I_1),\ldots,\beta^k(I_1)$ in $I_1 \in \calI_1^{IR}$  to the expected value of the strategy. We start with bounding this error in a single node.

\begin{lemma}\label{lemma:infset-bound}
Let $h \in I_1$ be a history and $\beta^1$, $\beta^2$ be behavioral strategies (possibly prescribing different behavior in $I_1$) prescribing the same distribution over actions for all subsequent histories $h' \sqsupset h$. Let $v_{max}(h)$ and $v_{min}(h)$ be maximal and minimal utilities of player 1 in the subtree of $h$, respectively. Then the following holds:
{\small
\[ | v_{\beta^1}(h) - v_{\beta^2}(h) | \leq \frac{v_{diff}(h)}{2} \cdot \| \beta^1(I_1) - \beta^2(I_1) \|_1 \text{,} \]}
where $v_{\beta^j}(h)$ is the maxmin value $u(\beta^j,\beta_2^{BR})$ of strategy $\beta^j$ of player 1 given the play starts in $h$ and $v_{diff}(h) = v_{max}(h) - v_{min}(h)$.
\end{lemma}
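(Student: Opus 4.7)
The plan is to first fix an arbitrary strategy $\sigma$ of player~2, bound $|u(\beta^1,\sigma) - u(\beta^2,\sigma)|$ restricted to the subtree rooted at $h$, and only afterwards reintroduce the best-response of player~2 via a sandwich argument. Because $\beta^1$ and $\beta^2$ agree on every strict descendant $h' \sqsupset h$, and because no absentmindedness guarantees that $I_1$ is not revisited below $h$, the expected value at $h$ under either strategy decomposes cleanly as
\[ u(\beta^j,\sigma \mid h) \;=\; \sum_{a \in \calA(I_1)} \beta^j(I_1,a)\,\cdot\, Q_\sigma(h,a), \]
where the continuation value $Q_\sigma(h,a)$ after playing $a$ at $h$ against $\sigma$ is the same for $j=1$ and $j=2$.

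The next step is a centering trick that is the technical heart of the argument. Let $\mu = \tfrac{1}{2}(v_{max}(h)+v_{min}(h))$, so $|Q_\sigma(h,a)-\mu| \leq v_{diff}(h)/2$ for every action $a$ and every opponent strategy $\sigma$. Since $\beta^1(I_1,\cdot)$ and $\beta^2(I_1,\cdot)$ are probability distributions over $\calA(I_1)$, the coordinate-wise difference sums to zero, which lets me subtract $\mu$ inside the sum at no cost:
\[ u(\beta^1,\sigma \mid h) - u(\beta^2,\sigma \mid h) \;=\; \sum_{a}\bigl(\beta^1(I_1,a)-\beta^2(I_1,a)\bigr)\bigl(Q_\sigma(h,a)-\mu\bigr). \]
Taking absolute values, applying $|Q_\sigma(h,a)-\mu|\leq v_{diff}(h)/2$ termwise, and recognising the remaining sum as $\|\beta^1(I_1)-\beta^2(I_1)\|_1$ gives the $\sigma$-uniform inequality $|u(\beta^1,\sigma\mid h) - u(\beta^2,\sigma\mid h)| \leq \tfrac{v_{diff}(h)}{2}\,\|\beta^1(I_1)-\beta^2(I_1)\|_1$.

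Finally, I will remove the dependence on the fixed $\sigma$. Let $\beta_2^{BR,j}$ be a best response of player~2 (the minimiser) to $\beta^j$. Optimality of $\beta_2^{BR,2}$ against $\beta^2$ yields $u(\beta^2,\beta_2^{BR,1}\mid h) \geq u(\beta^2,\beta_2^{BR,2}\mid h) = v_{\beta^2}(h)$, and combining this with the $\sigma$-uniform bound evaluated at $\sigma = \beta_2^{BR,1}$ produces $v_{\beta^1}(h) - v_{\beta^2}(h) \leq \tfrac{v_{diff}(h)}{2}\,\|\beta^1(I_1)-\beta^2(I_1)\|_1$; the reverse inequality follows by swapping the roles of $\beta^1$ and $\beta^2$. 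The main subtlety to keep an eye on is exactly this best-response step, since $\beta_2^{BR,1}$ and $\beta_2^{BR,2}$ are generally different strategies — the sandwich bypasses any direct comparison between them and is what makes the simple centering bound carry through to the maxmin values $v_{\beta^j}(h)$.
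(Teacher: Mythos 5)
Your proof is correct, and its core cancellation is the same one the paper uses, just dressed differently: the paper splits $\calA(I_1)$ into sets $A^+$ and $A^-$ of actions whose probabilities increase/decrease, notes $C^+=C^-$ because both $\beta^1(I_1)$ and $\beta^2(I_1)$ are distributions, and charges each unit of shifted mass at most $v_{max}(h)-v_{min}(h)$, which is exactly what your centering at $\mu=\tfrac{1}{2}(v_{max}(h)+v_{min}(h))$ achieves via $\sum_a(\beta^1(I_1,a)-\beta^2(I_1,a))\mu=0$. Where you genuinely add something is in handling the fact that $v_{\beta^1}(h)$ and $v_{\beta^2}(h)$ are values against \emph{different} best responses: the paper's ``risks losing $v_{max}(h)\cdot C^-$ / can add at worst $v_{min}(h)\cdot C^+$'' phrasing implicitly treats the opponent's reply as fixed and leaves the passage to maxmin values informal, whereas your two-step argument --- a $\sigma$-uniform bound $|u(\beta^1,\sigma\mid h)-u(\beta^2,\sigma\mid h)|\leq\tfrac{v_{diff}(h)}{2}\|\beta^1(I_1)-\beta^2(I_1)\|_1$ for every opponent strategy $\sigma$, followed by the sandwich $v_{\beta^1}(h)\leq u(\beta^1,\beta_2^{BR,1}\mid h)$ and $v_{\beta^2}(h)\leq u(\beta^2,\beta_2^{BR,1}\mid h)$ --- makes this step explicit and airtight, including the role of no absentmindedness in guaranteeing that $I_1$ is not revisited below $h$ so the continuation values $Q_\sigma(h,a)$ coincide for both strategies. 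In short, the paper's version is shorter and more intuitive; yours is a more rigorous rendering of the same bound, with the best-response subtlety resolved rather than glossed over.
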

\begin{proof}
Let us study strategies $\beta^1$ and $\beta^2$ in node $h$. Let us take $\beta^1(I_1)$ as a baseline and transform it towards $\beta^2(I_1)$. We can identify two subsets of $\calA(I_1)$ --- a set of actions $A^+$ where the probability of playing the action in $\beta^2$ was increased and $A^-$ where the probability was decreased. Let us denote
{\small
\[ C^\circ = \sum_{a \in A^\circ} | \beta^1(I_1,a) - \beta^2(I_1,a) |\quad \forall \circ \in \{+, -\}\text{.} \]}

We know that $C^+ = C^-$ (as strategies have to be probability distributions). Moreover we know that $\| \beta^1(I_1) - \beta^2(I_1) \|_1 = C^+ + C^-$. In the worst case, decreasing the probability of playing action $a \in A^-$ risks losing quantity proportional to the amount of this decrease multiplied by the highest utility in the subtree $v_{max}(h)$. For all actions $a \in A^-$ this loss is equal to
{\small \[ v_{max}(h) \cdot \sum_{a \in A^-} | \beta^1(I_1,a) - \beta^2(I_1,a) | = v_{max}(h) \cdot C^- \text{.} \]}
Similarly the increase of the probabilities of actions in $A^+$ can add in the worst case $v_{min}(h) \cdot C^+$ to the value of the strategy. This combined together yields
{\small
\begin{align*}
v_{\beta^2}(h) - &v_{\beta^1}(h) \geq -v_{max}(h) \cdot C^- + v_{min}(h) \cdot C^+ \\
&= [ -v_{max}(h) + v_{min}(h) ] \cdot C^+\\
&= \frac{-v_{max}(h) + v_{min}(h)}{2} \cdot 2C^+ \\
&= \frac{-v_{max}(h) + v_{min}(h)}{2} \cdot \| \beta^1(I_1) - \beta^2(I_1)\|_1 \text{.}
\end{align*}}
The strategies $\beta^1$, $\beta^2$ are interchangeable which results in the final bound on the difference of $v_{\beta^2}(h)$, $v_{\beta^1}(h)$.
\end{proof}

Now we are ready to bound the error in the whole game tree.
\begin{theorem}
The error of the Upper Bound MILP is bounded by
{\small
\[ \epsilon = 10^{-P} \cdot d \cdot \calA_1^{max} \cdot \frac{v_{max}(\emptyset) - v_{min}(\emptyset)}{2} \text{,} \]}
where $d$ is the maximum number of player 1's imperfect recall information sets encountered on a path from the root to a terminal node, $\calA_1^{max} = \max_{I_1 \in \calI_1^{IR}} |\calA(I_1)|$ is the branching factor and $v_{min}(\emptyset)$, $v_{max}(\emptyset)$ are the lowest and highest utilities for player 1 in the whole game, respectively.
\end{theorem}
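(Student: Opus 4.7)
The plan is to combine Proposition~\ref{prop:recon} with Lemma~\ref{lemma:infset-bound} and propagate the per-information-set error from leaves up to the root in a bottom-up induction on the game tree. The MILP solution yields realization probabilities which, in each imperfect recall information set $I_1 \in \calI_1^{IR}$, induce possibly different behavioral strategies $\beta^1(I_1),\ldots,\beta^k(I_1)$ along the different sequences reaching $I_1$. By Proposition~\ref{prop:recon}, we can pick one corrected behavioral distribution $\beta(I_1)$ with $\|\beta(I_1) - \beta^j(I_1)\|_1 \leq |\calA(I_1)| \cdot 10^{-P}$ for every $j$. Doing this simultaneously in every $I_1 \in \calI_1^{IR}$ produces a valid behavioral strategy $\beta_1$ for player~$1$ in $G$, and the theorem amounts to showing that the value of $\beta_1$ against any pure best response differs from the MILP's optimum by at most $\epsilon$.

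First I would set up the induction on the depth of player~$1$'s imperfect-recall information sets encountered on a path. For a history $h$ lying strictly below all of player~$1$'s imperfect recall information sets, the MILP's bilinear relaxation agrees exactly with the true expected value (no correction is needed), giving base case error~$0$. For the inductive step, consider a history $h$ whose subtree contains at most $m \leq d$ imperfect recall information sets of player~$1$ on any path, and let $I_1$ be the topmost such information set (assume $h \in I_1$; if not, we simply pass the inductive bound up through chance nodes, player~$2$'s nodes, and perfect-recall nodes of player~$1$, which are linear and convex-combining operations that do not inflate an $L^\infty$-style error bound on value). Let $\beta^{\mathrm{MILP}}$ denote the strategy read off from the MILP solution (which is not necessarily valid in $G$ because of the $\Delta r$ slack) and $\beta_1$ the corrected strategy. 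By the inductive hypothesis applied below $I_1$, the value of $\beta_1$ in each child of $I_1$ differs from that of $\beta^{\mathrm{MILP}}$ by at most $(m-1) \cdot 10^{-P} \cdot \calA_1^{\max} \cdot \tfrac{v_{diff}(h)}{2}$. Applying Lemma~\ref{lemma:infset-bound} to compare the two strategies at $I_1$ itself (where they agree below $I_1$ modulo the inductive error, and differ at $I_1$ by at most $|\calA(I_1)| \cdot 10^{-P}$ in $L^1$ by Proposition~\ref{prop:recon}) yields an additional error of at most $\tfrac{v_{diff}(h)}{2} \cdot |\calA(I_1)| \cdot 10^{-P}$. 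Summing the two contributions and bounding $|\calA(I_1)| \leq \calA_1^{\max}$ gives the inductive bound $m \cdot 10^{-P} \cdot \calA_1^{\max} \cdot \tfrac{v_{diff}(h)}{2}$.

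Specializing to the root ($h = \emptyset$, $m = d$) yields exactly the claimed $\epsilon = 10^{-P} \cdot d \cdot \calA_1^{\max} \cdot \tfrac{v_{max}(\emptyset) - v_{min}(\emptyset)}{2}$. Finally, since the MILP optimum is an upper bound on the true maxmin value (by virtue of its relaxed bilinear constraints), and the corrected $\beta_1$ achieves a true maxmin value no worse than the MILP value minus this inductively accumulated error, the gap between the MILP optimum and the true maxmin value is at most $\epsilon$.

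\textbf{Main obstacle.} The delicate point is the inductive step around an imperfect recall information set $I_1$: we must argue that Lemma~\ref{lemma:infset-bound}, which assumes the two strategies \emph{agree} on all strictly deeper histories, can be combined additively with the inductive error coming from below. The cleanest way is to interpose an intermediate strategy $\beta^{\mathrm{mid}}$ that uses the MILP distribution at $I_1$ but the corrected distributions below, compare $\beta^{\mathrm{MILP}}$ with $\beta^{\mathrm{mid}}$ via the inductive hypothesis, and compare $\beta^{\mathrm{mid}}$ with $\beta_1$ via Lemma~\ref{lemma:infset-bound}, then add the two errors by the triangle inequality; care is needed to see that both terms are bounded by the same $\tfrac{v_{diff}(h)}{2}$ factor rather than a larger subtree-utility range, which is what lets the final bound depend only on $v_{diff}(\emptyset)$ and the depth parameter $d$.
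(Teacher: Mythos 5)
Your proposal follows essentially the same route as the paper's proof: correct the MILP strategy per information set via Proposition~\ref{prop:recon}, bound the local loss with Lemma~\ref{lemma:infset-bound} by interposing an intermediate strategy corrected only below the current node, propagate bottom-up noting that non-imperfect-recall nodes (including the minimizing opponent's, pessimistically) only take maxima of child losses, and accumulate at most $d$ constant increments along any path before invoking the fact that the MILP optimum upper-bounds the bilinear program. The argument is correct and matches the paper's proof in all essential respects.
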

\begin{proof}
We show an inductive way to compute the bound on the error and we show that the bound from Theorem~\ref{lemma:alg-bound} is its upper bound. Throughout the derivation we assume that the opponent plays to maximize the error bound. We proceed in a bottom-up fashion over the nodes in the game tree, computing the maximum loss $L(h)$ player~1 could have accumulated by correcting his behavioral strategy in the subtree of $h$, i.e.
{\small
\[ L(h) \geq u_h(\beta^0) - u_h(\beta^{IR}) \text{,} \]}
where $\beta^0$ is the (incorrect) behavioral strategy of player 1 acting according to the realization probabilities $r(\sigma)$ from the solution of the Upper Bound MILP, $\beta^{IR}$ is its corrected version and $u_h(\beta)$ is the expected utility of a play starting in history $h$ when player 1 plays according to $\beta$ and his opponent best responds (without knowing that the play starts in $h$). The proof follows in case to case manner.

(1) No corrections are made in subtrees of leafs $h$, thus the loss $L(h)=0$.

(2) The chance player selects one of the successor nodes based on the fixed probability distribution. The loss is then the expected loss over all child nodes $L(h) = \sum_{a \in \calA(h)} L(h \cdot a) \cdot \calC(h \cdot a) / \calC(h)$.
In the worst case, the chance player selects the child with the highest associated loss, therefore {\small $$L(h) \leq \max_{a \in \calA(h)} L(h \cdot a).$$}

(3) Player 2 wants to maximize player 1's loss. Therefore she selects such an action in her node $h$ that leads to a node with the highest loss,
$L(h) \leq \max_{a \in \calA(n)} L(h \cdot a)$.
This is a pessimistic estimate of the loss as she may not be able to pick the maximizing action in every state because of the imperfection of her information.

(4) If player 1's node $h$ is not a part of an imperfect recall information set, no corrective steps need to be taken. The expected loss at node $h$ is therefore $L(h) = \sum_{a \in \calA(h)} \beta^0(h,a) L(h \cdot a)$.
Once again in the worst case player 1's behavioral strategy $\beta^0(h)$ selects deterministically the child node with the highest associated loss, therefore $L(h) \leq \max_{a \in \calA(h)} L(h \cdot a)$.

(5) So far we have considered cases that only aggregate losses from child nodes. If player 1's node $h$ is part of an imperfect recall information set, the correction step may have to be taken.
Let $\beta^{-h}$ be a behavioral strategy where corrective steps have been taken for successors of $h$ and let us construct a strategy $\beta^h$ where the strategy was corrected in the whole subtree of $h$ (i.e. including $h$). Note that ultimately we want to construct strategy $\beta^\emptyset=\beta^{IR}$.

\noindent We know that values of children have been decreased by at most $\max_{a \in \calA(h)} L(h \cdot a)$, hence
$v_{\beta^0}(h) - v_{\beta^{-h}}(h) \leq \max_{a \in \calA(h)} L(h \cdot a)$.
Then we have to take the corrective step at the node $h$ and construct strategy $\beta^h$. From Lemma~\ref{lemma:infset-bound} and the observation about the maximum distance of behavioral strategies within a single imperfect recall information set $I_1$, we get:
{\small
\begin{align*}
v_{\beta^{-h}}(h) - v_{\beta^h}(h) &\leq \frac{v_{diff}(h)}{2} \cdot 10^{-P}|\calA_1(I_1)| \\
&\leq \frac{v_{diff}(\emptyset)}{2} \cdot 10^{-P}\calA_1^{max}
\end{align*}}
The loss in the subtree of $h$ is equal to $v_{\beta^0}(h) - v_{\beta^{-h}}(h)$ which is bounded by
{\small
\begin{align*}
L(&h)  =  v_{\beta^0}(h) - v_{\beta^h}(h) \\
&= \left[ v_{\beta^{-h}}(h) - v_{\beta^h}(h) \right] + \left[ v_{\beta^0}(h) - v_{\beta^{-h}}(h) \right] \\
& \leq  \frac{v_{diff}(\emptyset)}{2} \cdot 10^{-P}\calA_1^{max}  + \max_{a \in \calA(h)} L(h \cdot a) \text{.}
\end{align*}}

We will now provide an explicit bound on the loss in the root node $L(\emptyset)$. We have shown that in order to prove the worst case bound it suffices to consider deterministic choice of action at every node --- this means that a single path in the game tree is pursued during propagation of loss. The loss is increased exclusively in imperfect recall nodes and we can encounter at most $d$ such nodes on any path from the root. The increase in such nodes is constant ($[v_{max}(\emptyset)-v_{min}(\emptyset)] \cdot 10^{-P}\calA_1^{max} /2$), therefore the bound is $\epsilon = L(\emptyset) \leq [v_{max}(\emptyset)-v_{min}(\emptyset)] \cdot d \cdot 10^{-P}\calA_1^{max} /2$.

We now know that the expected value of the strategy we have found lies within the interval $[ v^* - \epsilon, v^* ]$, where $v^*$ is the optimal value of the Upper Bound MILP. As $v^*$ is an upper bound on the solution of the original bilinear program, no strategy can be better than $v^*$ --- which means that the strategy we found is $\epsilon$-optimal.
\end{proof}

\begin{theorem}
Let $P_{max}(I_1,a)$ be the maximum number of digits of precision used for representing variable $x(a)$ set as
{\small
\[ P_{max}(I_1,a) = \left\lceil \max_{h \in I_1} \log_{10} \frac{|\calA(I_1)| \cdot d \cdot v_{diff}(h)}{2 \epsilon} \right\rceil \text{,} \]}
where $v_{diff}(h) = v_{max}(h) - v_{min}(h)$.
With this setting, Algorithm~\ref{alg:bnb} terminates and it is guaranteed to return an $\epsilon$-optimal strategy for player 1.
\end{theorem}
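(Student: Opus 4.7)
The proof splits into two claims: if Algorithm~\ref{alg:bnb} terminates, the returned strategy is $\epsilon$-optimal; and the algorithm does in fact terminate.

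For the first claim, the plan is to generalise the error bound of Theorem~\ref{th:milperror} so that each bilinear term is allowed its own precision $P_{max}(I_1,a)$ rather than a single common $P$. Inspecting the proof of Proposition~\ref{prop:recon}, the bound $\| \beta(I_1) - \beta^j(I_1) \|_1 \leq |\calA(I_1)| \cdot 10^{-P}$ depends only on the precision used for the $x(a)$ variables attached to $I_1$; replacing the uniform $10^{-P}$ by the looser of the relevant $10^{-P_{max}(I_1,a)}$ values yields the analogous statement with variable precision. Feeding this into Lemma~\ref{lemma:infset-bound}, the per-history loss incurred at a history $h \in I_1$ is at most $10^{-P_{max}(I_1,a)} \cdot |\calA(I_1)| \cdot v_{diff}(h)/2$, and the definition of $P_{max}(I_1,a)$ --- in particular the inner $\max_{h \in I_1}$ --- bounds this quantity by $\epsilon/d$ uniformly over $h \in I_1$.

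Next I would propagate the error through the tree in exactly the bottom-up manner used in the proof of Theorem~\ref{th:milperror}: along any root-to-leaf path we encounter at most $d$ imperfect-recall information sets of player~1, and each contributes at most $\epsilon/d$, so the total loss at the root is at most $\epsilon$. Consequently, at every BnB node in which all $w$ variables are integral and the precision has saturated at $P_{max}(I_1,a)$ on every pair $(I_1,a)$, the strategy produced by Proposition~\ref{prop:recon} has expected value within $\epsilon$ of the LP upper bound, so the gap test on line~\ref{alg:bnb:eps} is certain to trigger.

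For termination, I would argue directly that the BnB tree is finite. Along any branch and for any fixed pair $(I_1,a)$, line~\ref{alg:bnb:refine} can fire at most $P_{max}(I_1,a)$ times because the precision of $x(a)$ is capped by that quantity. At a fixed precision level, the halving steps on lines~\ref{alg:bnb:left}--\ref{alg:bnb:right} restrict the discrete feasible range of $x(a)$ --- which contains at most $10^{P_{max}(I_1,a)}+1$ grid points --- by a factor of two per step, and therefore fire at most $\lceil P_{max}(I_1,a) \log_2 10 \rceil$ times before the range collapses to a single point. Summing over the finitely many $(I_1,a)$ pairs gives a finite bound on the depth of every branch and hence on the size of the tree; best-first search over a finite tree terminates.

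The main obstacle is the first step: the uniform-precision statement of Theorem~\ref{th:milperror} bundles all infosets under the single constant $10^{-P}\calA_1^{max} v_{diff}(\emptyset)/2$, whereas here the per-infoset constants must be replaced by the tighter local quantity $10^{-P_{max}(I_1,a)} |\calA(I_1)| v_{diff}(h)/2$ and must still survive the bottom-up accumulation. Verifying that the ceiling-of-logarithm choice of $P_{max}(I_1,a)$ delivers the right $\epsilon/d$ budget at each infoset simultaneously --- without coupling across infosets through shared continuation values --- is the technical nub of the argument; once this is in hand, termination and $\epsilon$-optimality follow routinely.
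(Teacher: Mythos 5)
Your proposal is correct and takes essentially the same route as the paper: termination via a finite bound on halving/refinement steps per variable and precision level with precision capped at $P_{max}(I_1,a)$ and branching factor at most 3, and $\epsilon$-optimality by rerunning the bottom-up loss propagation from the proof of Theorem~\ref{th:milperror} with per-infoset precision, so that each of the at most $d$ imperfect-recall information sets on any root--leaf path contributes at most $\epsilon/d$. The ``technical nub'' you flag (decoupling across infosets) is resolved exactly by that additive propagation, which is what the paper does, so no gap remains.
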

\begin{proof}
We start by proving that Algorithm~\ref{alg:bnb} with this choice of $P_{max}(I_1,a)$ terminates. We will show that every branch of the branch-and-bound search tree is finite. This together with the fact that every node is visited at most once and the branching factor of the search tree is finite (every node of the search tree has at most 3 child nodes) ensures that the algorithm terminates.

Every node of the search tree is tied to branching on some variable $x(a)$. Let $p$ be the current precision used to represent $x(a)$ and let us consider the first node on the branch where $x(a)$ is represented with such precision. At such point, $p-1$ digits are fixed and thus $x \in [ c, c + 10^{-(p-1)}]$ for some $c \in [0,1]$. On line~\ref{alg:bnb:refine} an interval of size $10^{-p}$ is handled, every left/right operation (lines~\ref{alg:bnb:left} and~\ref{alg:bnb:right}) may thus handle an interval whose size is reduced at least by $10^{-p}$. We can conduct at most 9 left/right branching operations (lines~\ref{alg:bnb:left} and~\ref{alg:bnb:right}) before the size of the interval drops below $10^{-p}$, which forces us to increase $p$. At most 10 operations can be performed on every $x(a)$ for every precision $p$, the limit on $p$ is finite for every such variable and the number of variables is finite as well, the branch has therefore to terminate.

Let us now show that these limits on the number of refinements $P_{max}(I_1,a)$ are enough to guarantee $\epsilon$-optimality. We will refer the reader to the proof of Theorem~\ref{lemma:alg-bound} for details while we focus exclusively on the behavior in nodes from imperfect recall information sets.

Let $I_1 \in \calI_1^{IR}$ and $h \in I_1$. We know that the L1 distance between behavioral strategies in $I_1$ is at most $10^{-P_{max}(I_1,a)} \cdot |\calA(I_1)|$ (for any $a \in \calA(I_1)$). This means that the bound on $L(h)$ in $h$ from the proof of Theorem~\ref{lemma:alg-bound} is modified to:
{\small
\begin{align*}
L&(h) = v_{\beta^0}(h) - v_{\beta^h}(h) \\
&= \left[ v_{\beta^{-h}}(h) - v_{\beta^h}(h) \right] + \left[ v_{\beta^0}(h) - v_{\beta^{-h}}(h) \right] \\
& \leq \frac{v_{diff}(h)}{2} \cdot 10^{-P_{max}(I_1,a)}\cdot |\calA(I_1)| + \max_{a \in \calA(h)} \!\!L(h \cdot a) \\
& \leq \frac{v_{diff}(h)}{2}  \cdot \frac{|\calA(I_1)|\cdot 2\epsilon}{|\calA(I_1)| \cdot d \cdot [v_{diff}(h)]} + \max_{a \in \calA(h)}\!\!L(h \cdot a) \\
 &= \frac{\epsilon}{d} + \max_{a \in \calA(h)} L(h \cdot a) \text{.}
\end{align*}}

Similarly with the reasoning in the proof of Theorem~\ref{lemma:alg-bound}, it suffices to assume players choosing action at every node in a deterministic way. The path induced by these choices contains at most $d$ imperfect recall nodes, thus $L(\emptyset)=d \cdot \epsilon / d = \epsilon$.
\end{proof}
\begin{theorem}
When using $P_{max}(I_1,a)$ from Theorem \ref{th:prec} for all $I_1 \in \calI_1$ and all $a \in \calA(I_1)$, the number of iterations of the \textsc{BnB} algorithm needed to find an $\epsilon$-optimal solution is in $O(3^{4S_1( \log_{10}(S_1 \cdot v_{diff}(\emptyset)) + 1)} 2^{-5S_1}\epsilon^{-5S_1})$, where $S_1 = |\calI_1|\calA_1^{max}$.
\end{theorem}
\begin{proof}
We start by proving that there is $N \in O(3^{4|\calI_1|\calA_1^{max}P_{max}})$ nodes in the BnB tree, where ${\calA}_1^{max} = \max_{I \in {\calI}_1}|{\calA}(I)|$ and $P_{max} = \max_{I \in {\calI}_1, a \in A(I)}P_{max}(I, a)$. This holds since in the worst case we branch for every action in every information set (hence $|{\calI}_1|{\calA}_1$). We can bound the number of branchings for a fixed action by $4P_{max}$, since there are 10 digits (we branch at most 4 times using binary halving) and we might require $P_{max}$ number of digits of precision. ${4|{\calI}_1|{\calA}_1^{max}P_{max}}$ is therefore the maximum depth of the branch-and-bound tree. Finally the branching factor of the branch-and-bound tree is at most 3.

By substituting
{\small $$\displaystyle\max_{I_1 \in \calI_1}\left\lceil \max_{h \in I_1} \log_{10} \frac{|\calA(I_1)| \cdot d \cdot v_{diff}(h)}{2 \epsilon} \right\rceil$$ }
for $P_{max}$ in the above bound (Theorem \ref{th:prec}), we obtain
{\small
\begin{align*}
N &\in  O(3^{4S_1 \max_{I_1 \in \calI_1}\left\lceil \max_{h \in I_1} \log_{10} \frac{|\calA(I_1)| \cdot d \cdot v_{diff}(h)}{2 \epsilon} \right\rceil}),\\
\shortintertext{\centering\text{where } $S_1 =  |{\calI}_1|{\calA}_1^{max}$}
&\in  O(3^{4S_1\max_{I_1 \in \calI_1}\left\lceil \log_{10} \frac{|\calA(I_1)| \cdot d \cdot v_{diff}(\emptyset)}{2 \epsilon} \right\rceil})\\
&\in  O(3^{4S_1\max_{I_1 \in \calI_1}\left\lceil \log_{10} \frac{S_1 v_{diff}(\emptyset)}{2 \epsilon} \right\rceil})\\
&\in  O(3^{4S_1\left\lceil \log_{10} \frac{S_1 \cdot v_{diff}(\emptyset)}{2 \epsilon} \right\rceil})\\
&\in  O(3^{4S_1( \log_{10} \frac{S_1 \cdot v_{diff}(\emptyset)}{2 \epsilon} + 1)})\\
&\in  O(3^{4S_1( \log_{10}(S_1 \cdot v_{diff}(\emptyset)) - \log_{10}(2 \epsilon) + 1)})\\
&\in  O(3^{4S_1( \log_{10}S_1 \cdot v_{diff}(\emptyset)) + 1)} 3^{-10S_1 \log_{10}(2 \epsilon)})\\
&\in  O(3^{4S_1( \log_{10}S_1 \cdot v_{diff}(\emptyset)) + 1)} 3^{-10S_1\frac{\log_3(2 \epsilon)}{\log_3(10)}})\\
&\in  O(3^{4S_1( \log_{10}S_1 \cdot v_{diff}(\emptyset)) + 1)} (2\epsilon)^\frac{-10S_1}{\log_3(10)})\\
&\in  O(3^{4S_1( \log_{10}(S_1 \cdot v_{diff}(\emptyset)) + 1)} (2\epsilon)^{-5S_1})
%&\in  O(3^{4S_1( \log_{10}(S_1 \cdot v_{diff}(\emptyset)) + 1)} 2^\frac{-10S_1}{\log_3(10)}\epsilon^\frac{-10|{\calI}_1|{\calA}_1^{max}}{\log_3(10)})
\end{align*}}
\end{proof}

\end{document}